\newtheorem{theorem}{Theorem}
\newtheorem{lemma}{Lemma}
\newtheorem{corollary}{Corollary}
\newtheorem*{definition}{Definition}
\newtheorem{remark}{Remark}
\newcounter{casenum}
\newenvironment{caseof}{\setcounter{casenum}{1}}{\vskip.5\baselineskip}
\newcommand{\scase}[2]{\vskip.5\baselineskip\par\noindent {\bfseries Case \arabic{casenum}:} #1\\#2\addtocounter{casenum}{1}}
\newcommand{\B}{\ensuremath{\mathcal{B}}}
\newcommand{\R}{\ensuremath{\mathbb{R}}}
\newcommand{\op}{\ensuremath{\oplus}}
\begin{document}
\title{Classical-Quantum Separations in Certain Classes of Boolean Functions\\ -- Analysis using the Parity Decision Trees}
\author{Chandra Sekhar Mukherjee}
\email{chandrasekhar.mukherjee07@gmail.com}
\affiliation{ Indian Statistical Institute, Kolkata}
\author{Subhamoy Maitra} 
\email{subho@isical.ac.in }
\affiliation{ Indian Statistical Institute, Kolkata}

\begin{abstract}
In this paper we study the separation between the deterministic (classical) query complexity ($D$)
and the exact quantum query complexity ($Q_E$) of several
Boolean function classes using the parity decision tree method.
We first define the Query Friendly (QF) functions on $n$ variables as the ones with minimum deterministic 
query complexity $(D(f))$. We observe that for each $n$, there exists a non-separable class of QF functions such that 
$D(f)=Q_E(f)$. Further, we show that for some values of $n$, all the QF functions are non-separable. Then we present QF 
functions for certain other values of $n$ where separation can be demonstrated, in particular, $Q_E(f)=D(f)-1$. In 
a related effort, we also study the Maiorana McFarland (M-M) type Bent functions.
We show that while for any M-M Bent function $f$ on $n$ variables $D(f) = n$, separation can be achieved as 
$\frac{n}{2} \leq Q_E(f) \leq \lceil \frac{3n}{4} \rceil$. Our results highlight how different classes of Boolean 
functions can be analyzed for classical-quantum separation exploiting the parity decision tree method.
\end{abstract}
\keywords{Boolean Functions, Bent Functions, Classical-Quantum Separation, Parity Decision Tree, 
Query Complexity, Query Friendly Functions}
\maketitle
\section{Introduction}
Query Complexity is a model of computation in which a function 
$f(x_1,x_2,\ldots ,x_n):\{0,1\}^n \rightarrow \{0, 1\}$ is evaluated 
using queries to the variables $x_i, \ 1 \leq i \leq n$. The query complexity model has been 
widely studied under different computational 
scenarios, such as classical deterministic model and exact quantum model \cite{Bool1}. 
While the study can be conducted for functions with any finite range, 
Boolean functions are most widely studied in this area, for their simplicity as well as 
the richness in terms of generalization. 
Substantial work has been completed on asymptotic separation of query complexity under different 
models~\cite{AMB1,ANDk,AMB2} and in finding separation between 
classical deterministic and exact quantum query complexity models 
for different Boolean functions, such as the  ${\sf EXACT^n_{k,l}}$~\cite{AMB4} functions.
One may note that the query complexity of a Boolean function does not necessarily relate to its optimal circuit depth.
However, in many cases the circuit obtained using the query complexity model remains the most optimal till date.
The query complexities of functions under different computational models also form a better picture of the advantage
offered by quantum computers in function evaluation.  
In this regard finding classical-quantum separation and the query complexity of Boolean functions
in different computational model remains a non trivial and interesting problem. 
We analyze this separation for two Boolean Function classes.
The first class comprises of Query Friendly functions, which we define for a given $n$
as Boolean functions with $n$ influencing variables with least possible deterministic
query complexity. 
The second class is that of the Maiorana McFarland (M-M) Bent functions~\cite{MM1} on $n$
variables, which is a 
large class of cryptographically important Boolean functions. 
The study of (M-M) Bent functions as a generalized class is also interesting because
this class consists of many functions which are not isomorphic to each other.

A common method of forming Quantum Algorithms in the query complexity model
is using the parity function to calculate the parity of two bits $x_{i_1} \op x_{i_2}$ using a single query.
This method is also practical for implementation in a noisy quantum computer as the bits
in superposition are measured after each oracle access. 
In this direction we use this method of calculating parity of variables in a disciplined manner along with 
combinatorial reductions to find separation in query complexity between the classical deterministic 
and the exact quantum model in the aforementioned Boolean Function classes.

We now introduce some notations and the concept of classical and quantum oracles and 
describe the deterministic classical and exact quantum query model in details.

{\bf Algebraic Normal Form (ANF):} It is known that given any total Boolean function, there exists a unique
multivariate polynomial defined over GF(2) which exactly defines the function.
Formally, one can write, $$f(x_1,x_2,\ldots,x_n)=\bigoplus_{\mathbf{a} = (a_1,\ldots,a_n) \in \{0, 1\}^n}\lambda_{\mathbf{a}}(\prod_{i=1}^n x_i^{a_i}),$$
where $\lambda_{\mathbf{a}}\in \{0, 1\}$ and $x_1, \ldots, x_n \in \{0, 1\}$. 
The Hamming weight of $\mathbf{x} \in \{0, 1\}^n$, $wt(\mathbf{x})$, is defined as $wt(\mathbf{x}) = \sum_{i=1}^n x_i$ where the sum is 
over ring of integers. The algebraic degree of $f$, $\deg(f)$, is defined as 
$\deg(f)=\max_{\mathbf{a} \in \{0, 1\}^n}\{wt(\mathbf{a}): \lambda_{\mathbf{a}}\neq 0\}$.

We also define the term influencing variables in this context. We call a variable $x_i$ of a function $f(x_1,x_2,\ldots, x_n)$ influencing
if there exists a set of values $\{x_1, x_2, \ldots, x_{i-1}, x_{i+1}, \ldots, x_n\}$ such that
\begin{align*}
&f(x_1, x_2, \ldots, x_{i-1}, 0, {x_{i+1}}, \ldots, x_n) \neq \\
&f(x_1, x_2, \ldots, x_{i-1}, 1, {x_{i+1}}, \ldots, x_n).
\end{align*}

The number of influencing variables is also represented as the number
of variables present in the ANF of the corresponding function.

It is also important to note that the algebraic degree of a Boolean function is different 
from the polynomial degree of the function, which commonly used in obtaining lower bounds 
of query algorithms.

\ \\
{\bf Classical and Quantum Oracle:}
In the query complexity model, the value of any variable can only be queried using an oracle.
An oracle is a black-box which can perform a particular computation.
In the classical model, an oracle accepts an input $i \ (1 \leq i \leq n)$ and output the value of the variable $x_i$. 
In the quantum model, the oracle needs to be reversible. It is represented as an unitary $O_x$ which functions as follows.
\begin{align*}
O_x\ket{i}\ket{\phi}=\ket{i}\ket{\phi \oplus x_i},~1 \leq i \leq n \\
\end{align*}
Figure \ref{fig:qc0} represents the working of an oracle in the quantum complexity model, which is similar to what is presented 
in~\cite[Fig. 3]{oneQ}

\begin{figure}[!htb]
\begin{center}
\includegraphics[scale=0.4]{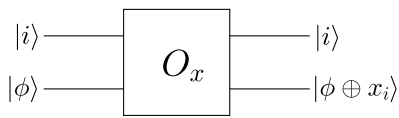}
\end{center}
\caption{Working of a quantum oracle}
\label{fig:qc0}
\end{figure}

The query complexity of a function is the maximum number of times this oracle needs to be used to evaluate the value of the function $f$ 
for any value of the variables $x_1, x_2, \ldots, x_n$. We will be focusing on total Boolean functions from here on, i.e., 
$f : \{0,1\}^n \rightarrow \{0,1\}$. Let us now specify the models.

\ \\
{\bf Deterministic (Classical) Query Complexity:}
The minimum number of queries that a function $f$ needs to be evaluated using a deterministic algorithm
is called its Deterministic Query Complexity ($D(f)$). We generally omit the word `classical'. 
A query based classical deterministic algorithm for evaluating a Boolean function $f:\{0,1\}^n \rightarrow \{0,1\}$ 
can be expressed as a rooted decision tree as follows. 

In this model, every internal node corresponds to a query to a variable $x_i \ 1 \leq i \leq n$.
Each leaf is labeled as either $0$ or $1$.
The tree is traversed from the root of the tree till it reaches a leaf in the following manner.
Every internal node has exactly two children and depending on 
the outcome of the query ($0$ or $1$ respectively), one of the two children are visited (left or right, respectively). 
That is this is a binary tree. The leaf nodes correspond to the output of $f$ for different inputs. 
Every decision tree uniquely defines a Boolean function which we can obtain by deriving the 
Algebraic Normal Form (ANF) from a given tree. 
For example, the ANF of the Boolean function corresponding to the tree shown in Figure \ref{fig:tree1}
is $(x_1 \op 1)(x_2) \op x_1(x_3 \op 1)=x_1x_2 \op x_1x_3 \op x_1 \op x_2 \op x_3$.

\begin{figure}[!htb]
\begin{center}
\includegraphics[scale=0.4]{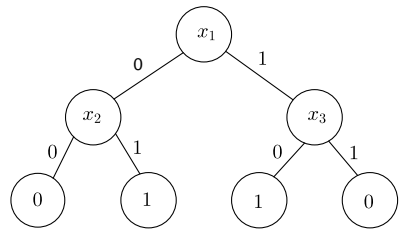}
\end{center}
\caption{Example of a decision tree}
\label{fig:tree1}
\end{figure}

Corresponding to a function, there can be many Deterministic Query Algorithms that can evaluate it.
The depth of a decision tree is defined as the number of edges 
encountered in the  longest root to leaf path. 
Given $f$, the shortest depth decision tree representing the function, is called the optimal decision 
tree of $f$ and the corresponding depth is termed as the Deterministic classical complexity of $f$, denoted as
$D(f)$. We further describe the following notations related to the decision tree model.
\begin{itemize}
\item Let there be a decision tree corresponding to a function $f$
such that no variable appears twice in the tree.
We can then identify an internal node with the variable that it queries.
In such a case we use the notation $val(x_i,c)$ to denote
the left or right children of the internal node which queries the variable $x_i$
where c is $0$ or $1$, respectively.

\item We also define a fully-complete binary tree.
We call a binary tree fully-complete if it has depth $k$ and there are total 
$2^k-1$ internal nodes, i.e. a fully complete binary tree is a $k$-depth decision tree in which
every internal node has two children and all the nodes in the $k$-th level
are parents of leaves.
It is to be noted that this differs from a complete binary tree,
in which every level other than the last is completely filled,
and in the last level all the nodes are as far left as possible.
\end{itemize}

\ \\
{\bf Exact Quantum Query Complexity :}
A Quantum Query Algorithm is defined using a start state $\ket{\psi_{start}}$
and a series of unitary Transformations $$U_0, O_x, U_1, O_x, \ldots, U_{t-1}, O_x, U_t,$$ 
where the unitary operations $U_j$ are indifferent of the values of the variables $x_i$ 
and $O_x$ is the oracle as defined above.
Therefore, the final state of the algorithm is $$\ket{\psi_{final}}= U_t O_x U_{t-1} \ldots U_1 O_x U_0 \ket{\psi_{start}}$$
and the output is decided by some measurement of the state  $\ket{\psi_{final}}$.
A quantum algorithm is said to exactly compute $f$ if for all $(x_1, x_2, \ldots, x_n)$ it outputs the value of the function
correctly with probability 1.
The minimum number of queries needed by a Quantum Algorithm to achieve this is 
called the Exact Quantum Query Complexity $Q_E(f)$ of the function.

\ \\
{\bf Isomorphism( PNP equivalence):} Two functions $f$ and $g$ over $\{0, 1\}^n$
are called isomorphic (PNP equivalent) if the ANF of $f$ can be derived from ANF of $g$
by negation and permutation of the input variables of $g$
and by adding the constant term $1$ in the ANF, that is negation of the output.
If $f$ and $g$ are isomorphic then $D(f)=D(g)$ and $Q_E(f)=Q_E(g)$~\cite[Section 2.2]{exact}.

\ \\
{\bf Separability :}
A Boolean function $f$ is called separable if $Q_E(f)<D(f)$ and non-separable otherwise.

In this paper we concentrate on the deterministic and exact quantum query complexity of different 
Boolean function classes. 
There are other computational models such as the classical randomized model 
and the bounded error quantum model \cite{AMB3} and there exists rich literature on work on these models as well. 
However, those are not in the scope of this work. 

In this regard one may note that the work by Barnum et.al~\cite{sdp} can be used 
to find the exact quantum query complexity of any function on $n$ variables by 
repetitively solving semi definite programs (SDP). Montanaro et.al~\cite{exact} have used this 
method to find exact quantum query complexity of all Boolean functions 
upto four variables as well as describe a procedure of formulating the quantum algorithm 
to achieve the said exact quantum query complexity. 
This method is not yet found to be suitable for finding 
the exact quantum query complexity of a general classes of Boolean functions.
Additionally, the SDP are resource intensive in nature and solving the SDP for large values of $n$
is computationally challenging. 
But for the cases where the number of variables is low, this does offer an exhaustive view of the
exact quantum query complexities of all Boolean functions.

As an example, in a very recent paper Chen et.al~\cite{oneQ} have shown that $f(x)=x_i$ or $f(x)=x_{i_1} \op x_{i_2}$
are the only Boolean functions with $Q_E(f)=1$.
However the work of Montanaro et.al~\cite[Section 6.1]{exact} show that
the Boolean functions $f$ with $2$ or lesser variables and $Q_E(f)=1$ are
\begin{itemize}
\item The single variable function $x_i$.
\item The two variable functions $x_{i_1} \op x_{i_2}$.
\end{itemize}
Then it is shown in~\cite[Section 6.2]{exact} that the minimum quantum exact 
quantum query complexity of any Boolean function with $3$ or more influencing variables is $2$.
This essentially implies that the work of ~\cite{oneQ} is in fact a direct corollary of~\cite{exact}.

\subsection{Organization \& Contribution}
In Section~\ref{sec:2}, we start by describing the fact that the maximum number
of influencing variables that a function with $k$ deterministic query complexity
can have is $(2^k-1)$. We first construct such a function using the decision tree model. 
The decision tree representation of such a function is a $k$-depth fully-complete 
binary tree in which every internal node queries a unique variable. 
We first prove in Theorem~\ref{th:2} that any function with $2^k-1$ influencing variables 
and $k$ deterministic query complexity must have the same exact quantum query complexity ($k$). 

Next, we define a special class of Boolean functions in Section~\ref{subsec:0}, called the ``Query Friendly" functions. 
A function $f$ with $n$ influencing variables is called query friendly if there does not exist 
any other function with $n$ influencing variables with lesser deterministic query complexity than $f$.
If $n$ lies between $2^{k-1}$ and $2^k-1$ (both inclusive) then all functions with 
deterministic query complexity $k$ are called query friendly functions. 
The proof in Theorem~\ref{th:2} directly implies that all query friendly functions 
with $n=2^k-1$ influencing variables are non-separable.

Then in Section~\ref{subsec:1} we identify a class of non-separable query friendly
functions for all values of $n$. We conclude this section by showing 
that all query friendly functions with $n=2^k-2 ~(k>2)$ influencing variables are non-separable as well.

In Section \ref{sec:3}, we describe the parity decision tree model.
We first discuss the simple result that a $k$-depth parity decision tree
can describe functions with upto $2^{k+1}-2$ influencing variables.
In Section~\ref{subsec:2} we define another set of query friendly functions on $n$ influencing variables 
that exhibit minimum separation (i.e., one)
between deterministic and exact quantum query complexity for certain generalized values of $n$.
We prove by construction that if $2^{k-1} \leq n <2^{k-1}+2^{k-2}$ then there exists a class of query friendly
functions such that for any function $f$ in that class we have $Q_E(f)=D(f)-1$.
One should observe that although we prove this separation for a particular function for any $n$, this 
implicitly proves separation for a class of Boolean functions, as reemphasized in Remark~\ref{remark1}.
We conclude the section by showing that for other values of $n$ there does not exist separable 
query friendly functions that can be completely described by the parity decision tree model.

Next in Section~\ref{sec:4} we study the Maiorana McFarland (M-M) type Bent functions, 
which is a cryptographically important class of Boolean Functions. 
This class is interesting as the algebraic degree of functions of this class
defined on $n$ variables vary between $2$ and $\frac{n}{2}$. 
First we observe that the deterministic query complexity of any function of this type on
$n$ variable is $n$. We further observe that the parity decision tree method can be used to 
form a simple algorithm that needs $\lceil \frac{3n}{4} \rceil$ queries for any function in this class.
We conclude this section by describing the real polynomial that describes any function 
belonging to this class, which gives us a lower bound of $\frac{n}{2}$ for the 
exact quantum query complexity of any function belonging to this class.

We conclude the paper in Section~\ref{sec:5} outlining the future direction of our work.
We further state open problems that we have encountered in this work. Solution
to these problems will help us understand the limitations of the parity decision tree model
as well as get possibly more optimal quantum algorithms for different classes of M-M functions.  

\section{Decision Trees and No-separation results}\label{sec:2}
As we have discussed, query algorithms can be expressed as decision trees in the classical deterministic model.
In this regard, let us present the two following simple technical results.
These results are well known in folklore and we present them for completeness.
\begin{lemma}\label{lemma:2}
There exists a Boolean function $f_k$ with $2^k-1$ influencing variables such that $D(f) \leq k$.
\end{lemma}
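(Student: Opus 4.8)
The plan is to prove the lemma by an explicit recursive construction, separating the easy claim $D(f_k)\le k$ from the more delicate claim that all $2^k-1$ variables are genuinely influencing. The skeleton of $f_k$ will be the $k$-depth fully-complete binary tree described just before the statement: at level $j$ (with the root at level $0$) there are exactly $2^j$ internal nodes, so the total number of internal nodes is $\sum_{j=0}^{k-1}2^j = 2^k-1$. I would assign a distinct variable to each internal node, so that the tree queries $2^k-1$ pairwise different variables and no variable is repeated along any root-to-leaf path. Since every such path has length exactly $k$, the tree is a valid deterministic decision tree of depth $k$, and the function $f_k$ it computes therefore satisfies $D(f_k)\le k$ immediately.

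It remains to choose the leaf labels so that each of the $2^k-1$ variables is influencing, and here I would argue by induction on $k$. For $k=1$ the tree is a single node querying one variable with the two leaves labelled $0$ and $1$, giving $f_1=x_1$, which has one influencing variable. For the inductive step I would take the root to query a fresh variable $x$, attach to its left child a copy of the depth-$(k-1)$ tree on a variable set $Y=\{y_1,\dots,y_{2^{k-1}-1}\}$ and to its right child a copy on a disjoint set $Z=\{z_1,\dots,z_{2^{k-1}-1}\}$. This uses $1+2(2^{k-1}-1)=2^k-1$ variables, as required, and each variable in $Y$ (respectively $Z$) is influencing because, after fixing $x$ so that the left (respectively right) subtree is entered, the induction hypothesis supplies an assignment to the remaining variables of that subtree witnessing its influence.

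The only genuinely non-trivial point is the influence of the root variable $x$, and this is exactly where the disjointness of $Y$ and $Z$ does the work. Because the two subtrees compute copies of $f_{k-1}$ on disjoint variable sets, their outputs can be controlled independently; since each is non-constant by the induction hypothesis, I can pick an assignment to $Y$ forcing the left subtree to output $0$ and an assignment to $Z$ forcing the right subtree to output $1$. With these held fixed, flipping $x$ flips the value of $f_k$, so $x$ is influencing. I expect the bookkeeping of the variable count and the verification that the recursion preserves the fully-complete structure to be routine; the main conceptual step is recognising that independence of the two subtrees, guaranteed by using disjoint variable sets, is precisely what makes the root variable influencing, and this is the one obstacle that a careless construction, such as reusing variables across the two subtrees, could fail to clear.
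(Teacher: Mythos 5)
Your construction is exactly the paper's: a fully-complete depth-$k$ decision tree whose $2^k-1$ internal nodes query pairwise distinct variables, which gives $D(f_k)\le k$ immediately. The paper stops there, leaving the choice of leaf labels and the verification that every variable is actually influencing implicit, so your inductive argument (disjoint variable sets in the two subtrees, forcing them to output $0$ and $1$ to witness the influence of the root) is a correct and more careful rendering of the same approach rather than a different one.
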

\begin{proof}
We construct this function for any $k$ as follows. We know that if a Boolean function $f$ can be 
expressed as a decision tree of depth $d$, then $D(f) \leq d$.
We now build a decision tree, which is a fully-complete binary tree of depth $k$. 
Each of the internal nodes in this tree is a unique variable,
that is, no variable appears in the decision tree more than once. 
Since there are $2^k-1$ internal nodes in such a tree, this decision tree represents 
a Boolean function $f_k$ on $2^k-1$ variables with $D(f_k) \leq k$.
 
Without loss of generality we can name the root variable
of the corresponding decision tree as $x_1$ 
and label the variables from left to right at each level in ascending order.
The resultant structure of the tree is shown in Figure \ref{fig:maxtreed}.
\end{proof}

\begin{figure}[!htb]
\begin{center}
\includegraphics[scale=0.28]{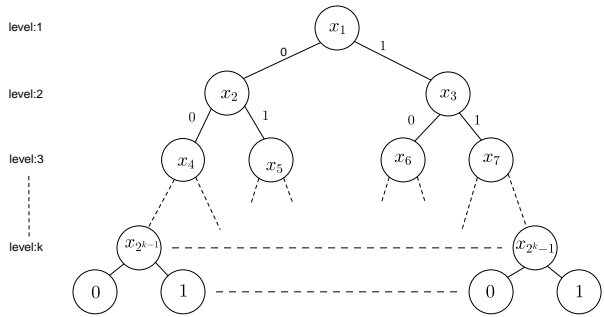}
\end{center}
\caption{Decision Tree corresponding to function $f$ with maximum influencing variables for $D(f)=k$}
\label{fig:maxtreed}
\end{figure}

Having constructed such a Boolean function $f_k$,
we now show that is indeed the function with the maximum number of influencing variables
that can be evaluated using the deterministic computational model using $k$ queries.

\begin{lemma}\label{lem:l0}
Given any integer $k$, the maximum number of influencing variables that a Boolean function $f$ has 
such that $D(f)=k$ is $2^k-1$.
\end{lemma}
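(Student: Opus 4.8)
The plan is to establish the matching upper bound---that any Boolean function $f$ with $D(f)=k$ has at most $2^k-1$ influencing variables---by induction on $k$, and then to combine this bound with Lemma~\ref{lemma:2} to pin down the maximum exactly. For the base case $k=1$, an optimal depth-$1$ decision tree has a single internal node and therefore queries only one variable, so $f$ can have at most $2^1-1=1$ influencing variable. For the inductive step I would assume the statement for $k-1$, namely that every function $g$ with $D(g)\le k-1$ has at most $2^{k-1}-1$ influencing variables.

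Given $f$ with $D(f)=k$, I would fix an optimal decision tree $T$ of depth $k$ and let $x_i$ be the variable queried at the root. The two subtrees of $T$ hanging below the root each have depth at most $k-1$ and compute the restricted functions $f_0=f|_{x_i=0}$ and $f_1=f|_{x_i=1}$; hence $D(f_0)\le k-1$ and $D(f_1)\le k-1$, so by the inductive hypothesis each of $f_0,f_1$ has at most $2^{k-1}-1$ influencing variables. The step I expect to require the most care is the combinatorial claim that every influencing variable of $f$ other than $x_i$ is an influencing variable of $f_0$ or of $f_1$. Concretely, if $x_j$ with $j\ne i$ is influencing for $f$, then by definition there is an assignment to the remaining variables on which flipping $x_j$ changes the value of $f$; fixing $x_i$ to its value $c\in\{0,1\}$ in that assignment shows that $x_j$ is influencing for $f_c$. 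Consequently the influencing variables of $f$ are contained in the set $\{x_i\}$ together with the influencing variables of $f_0$ and those of $f_1$, which bounds their number by $1+(2^{k-1}-1)+(2^{k-1}-1)=2^k-1$.

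Finally I would close the argument using Lemma~\ref{lemma:2}, which supplies a function $f_k$ with $2^k-1$ influencing variables and $D(f_k)\le k$. The upper bound just proved forces any function with deterministic complexity at most $k-1$ to have at most $2^{k-1}-1<2^k-1$ influencing variables, so $f_k$ cannot satisfy $D(f_k)\le k-1$; hence $D(f_k)=k$ exactly. This exhibits a function attaining $2^k-1$ influencing variables with deterministic complexity precisely $k$, matching the upper bound and establishing that the maximum is exactly $2^k-1$.
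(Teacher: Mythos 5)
Your proof is correct, but it takes a genuinely different route from the paper's. The paper argues directly and globally: every influencing variable of $f$ must be queried at some internal node of any decision tree computing $f$ (a variable never queried cannot change the output), and a binary tree of depth $k$ has at most $2^k-1$ internal nodes, so the bound falls out in one step. You instead induct on the depth, splitting an optimal tree at its root: the restrictions $f_0=f|_{x_i=0}$ and $f_1=f|_{x_i=1}$ are computed by the two subtrees, every influencing variable of $f$ other than the root variable remains influencing for $f_0$ or $f_1$, and the inductive hypothesis gives the count $1+2(2^{k-1}-1)=2^k-1$. (Strictly, since your hypothesis is phrased for all $g$ with $D(g)\le k-1$, you are doing strong induction and should also cover the trivial $D=0$ case of constant functions, but that is routine bookkeeping.) Each approach buys something. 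Your restriction claim is a clean, reusable fact, and you explicitly close the attainment direction by invoking Lemma~\ref{lemma:2} and deducing $D(f_k)=k$, a step the paper handles only in the surrounding text. On the other hand, the paper's direct counting yields, at no extra cost, the structural consequence recorded at the end of its proof and relied upon in Theorem~\ref{th:2}: any function with $2^k-1$ influencing variables and $D(f)=k$ must be computed by a fully-complete depth-$k$ tree in which every variable is queried exactly once. Your inductive argument does not directly produce this structural fact; extracting it would require tracking, through the induction, that equality in the count forces both subtrees to be extremal and the variable sets to be disjoint.
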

\begin{proof}
Suppose there exists a Boolean function with $n_1 ( > 2^k-1)$ influencing variables
that can be evaluated using $k$ queries. This implies that there exists a corresponding 
decision tree of depth $k$ that expresses this function.
However, in a decision tree corresponding to a Boolean function $f$, all the influencing variables 
should be present as an internal node at least once in the decision tree.
Otherwise,  
\begin{align*}
&f(x_1, x_2, \ldots, x_{i-1}, 0, \ldots, {x_n}) \\
&=f(x_1, x_2, \ldots, x_{i-1}, 1, \ldots, {x_n})~ \forall x_j \in \{0,1\}: j \neq i,
\end{align*}
which implies that $x_i$ is not an influencing variable of the function.
Since there cannot exist a decision tree of depth $k$ that has more than $2^k-1$ internal nodes, such a function can not exist.

This implies that for any function $f$ with $n=2^k-1$ influencing variables and $D(f)=k$,
the corresponding decision tree is a $k$-depth complete tree where every variable
is queried only once.
\end{proof}
It immediately follows that a function $f$ with $n=2^k-1$ influencing variables has
deterministic query complexity $D(f) \geq k$. 

\begin{theorem}\label{th:2}
Given any Boolean function $f$ with $2^k-1$ influencing variables and $D(f)=k$ we have $Q_E(f) = k$.
\end{theorem}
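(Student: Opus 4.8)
\emph{Proof proposal.} The plan is to pair the trivial upper bound with a counting lower bound. Since every deterministic decision tree is a special quantum algorithm, $Q_E(f)\le D(f)=k$, so the whole task is to prove $Q_E(f)\ge k$. First I would flag why the routine tools cannot work here: $f$ is computed by a depth-$k$ decision tree, so its real multilinear degree is at most $k$, and the polynomial method gives only $Q_E(f)\ge \deg(f)/2\ge k/2$. Likewise block sensitivity, certificate complexity, and the adversary bound are all at most the length of a root-to-leaf certificate, namely $k$, which falls well short of $k$ after the usual square-root losses that those bounded-error measures incur. Any viable argument must therefore exploit that the algorithm computes $f$ with \emph{zero} error, not merely with bounded error.

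The route I would take is to establish the following counting statement and then read off the theorem from it: any Boolean function $g$ with $Q_E(g)=q$ has at most $2^{q+1}-2$ influencing variables. Granting this, the theorem is immediate. The given $f$ has exactly $2^k-1$ influencing variables, and since $2^k-1>2^k-2=2^{(k-1)+1}-2$, no exact quantum algorithm using only $k-1$ queries can compute $f$; hence $Q_E(f)\ge k$, and combined with $Q_E(f)\le k$ we obtain $Q_E(f)=k$. This is precisely the bound that makes the maximal variable count $2^k-1$ the critical threshold.

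To prove the counting statement I would induct on $q$. The base case $q=1$ is exactly the cited characterization that the only functions with $Q_E=1$ are $x_i$ and $x_{i_1}\oplus x_{i_2}$, which have at most $2=2^{2}-2$ influencing variables. For the inductive step I would fix an optimal $q$-query exact algorithm and condition on its first oracle call, aiming to split the dependence of the function into parts each computable exactly with $q-1$ queries, thereby obtaining a recursion $m(q)\le 2\,m(q-1)+2$, whose solution with $m(1)=2$ is $m(q)=2^{q+1}-2$. The hard part will be exactly this step: a single query nominally reads one variable, but an exact quantum query can couple to many variables at once (parity being the cautionary example), so showing that one query can at most roughly double the number of variables a subsequent exact procedure can resolve is the genuine, exactness-dependent content. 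I expect this to demand a careful decomposition of the post-query state, and I would cross-check the whole scheme against the equivalent direct induction on $k$: the rigidity established in Lemma~\ref{lem:l0} forces the decision tree of $f$ to have a single root variable above two disjoint depth-$(k-1)$ fully-complete subtrees, each of which is itself all-influencing with deterministic complexity $k-1$, so that the subfunctions have exact complexity $k-1$ by the inductive hypothesis. There the same difficulty resurfaces as the need to charge one extra query for resolving the root before recursing into a subfunction of exact complexity $k-1$.
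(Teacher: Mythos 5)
Your upper bound $Q_E(f)\le D(f)=k$ is fine, but the entire lower bound in your proposal rests on a counting lemma --- ``any $g$ with $Q_E(g)=q$ has at most $2^{q+1}-2$ influencing variables'' --- that you do not prove, and that is not available to be cited: it is not established in this paper or, to the best of current knowledge, anywhere else. The paper proves this bound only for functions that can be \emph{completely described by parity decision trees} (Lemma~\ref{lemx1}), which is a severe restriction; for arbitrary exact quantum algorithms the best one gets from standard tools is $\deg^{\R}(f)\le 2Q_E(f)$ combined with Nisan--Szegedy-type bounds, giving roughly $q\cdot 4^{q}$ variables, exponentially weaker than $2^{q+1}-2$. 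Indeed, your lemma is strictly stronger than the theorem itself: it would immediately answer, in the negative, the paper's first open problem (whether some $f$ with $2^k-1$ influencing variables has $Q_E(f)<k$, necessarily with $D(f)>k$). Your own description of the inductive step concedes the gap: after one quantum query the state is a superposition over all indices, so there is no decomposition of $f$ into two pieces each exactly computable with $q-1$ queries, and no recursion $m(q)\le 2m(q-1)+2$ is forthcoming. The same obstruction kills your fallback induction on $k$: the inequality ``$Q_E(f)\ge 1+Q_E(\text{subfunction at the root})$'' is exactly the kind of exact-quantum composition lower bound that is not known to hold in general.

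The paper's actual argument avoids all of this with a restriction (reduction) argument. By Lemma~\ref{lem:l0} the optimal tree is a fully-complete depth-$k$ tree in which every variable appears exactly once; pick a root-to-leaf path $x_{i_1},\ldots,x_{i_k}$ and fix every off-path variable to a constant steering the tree to the appropriate leaf, so that the restricted function becomes $AND_k$ (or $OR_k$, depending on the leaf labels). Since fixing variables can only decrease exact quantum query complexity, $Q_E(f)\ge Q_E(AND_k)=Q_E(OR_k)=k$, where the tight exact lower bound for $AND_k$ is a known result (Beals et al., cited in the paper). Note also that your dismissal of ``routine tools'' is only half right: the naive polynomial bound $\deg/2$ is indeed too weak, but the known proof that $Q_E(AND_k)=k$ is itself a refined polynomial-method argument exploiting exactness, and the reduction lets the theorem inherit that strength for free. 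If you want to salvage your write-up, replace the counting lemma by this restriction step; as it stands, the proposal reduces the theorem to an open problem.
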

\begin{proof}
This is proven by showing that any function $f$ characterized as above 
is at least as hard to evaluate as the function $AND_k$, 
which is AND of $k$ variables.

Given such a function $f$, there exists a corresponding $k$-depth complete tree $T_f$.
As we have shown in Lemma~\ref{lem:l0}, in such a tree all internal nodes will query a variable
and all the variables will appear in the tree exactly once.

Given the decision tree $T_f$ corresponding to $f$ let 
$x_{i_1}, x_{i_2}, x_{i_3}, \ldots, x_{i_k}$ be a root to internal node path in the tree
so that children of $x_{i_k}$ are the leaf nodes. Here
\begin{align*}
val(x_{i_t},1)=x_{i_{t+1}}, \ 1 \leq t \leq k-1.
\end{align*} 
We call this set of variables $s_{max}$. 
We fix the values of the variables
$\{x_1, x_2, x_3, \ldots, x_{2^k-1} \} \setminus s_{max}$ as follows.
Each of the variables at a level less than or equal to $k-1$ is assigned either $0$ or $1$.
Now either 
$val(x_{i_k},0)=0 \text{ and } val(x_{i_k},0)=1$
or
$val(x_{i_k},0)=1 \text{ and } val(x_{i_k},0)=0$.

\begin{itemize}
\item In the first case, If a variable is at the $k$-th level, 
i.e., its children are the leaf nodes then each such variable $y_i$
is fixed at the value $c_i$ such that $val(y_i,c_i)=0$.
Then the function is reduced to $\displaystyle \prod_{t=1}^k x_{i_k}$.

\item In the second case, the values of variables $y_i$
in the $k$-th level is fixed at the value $e_i$ so that 
$val(y_i,c_i)=1$.
Then the function is reduced to  $\Big( \displaystyle \prod_{t=1}^k x_{i_k} \Big) \op 1$. 
\end{itemize} 
The reduced function is $AND_k$ in the first case and $OR_k$ in the second case.
In both the cases we have $Q_E(f) \geq k$, as $Q_E(AND_k)= Q_E(OR_k)= k$~\cite[Table 1]{ANDk}.
We also know that $Q_E(f) \leq D(f)$ for any Boolean function $f$ and therefore $Q_E(f) \leq k$.
Combining the two we get $Q_E(f)=k$.
\end{proof}

We reiterate the idea behind the proof to further simplify the argument.
Reducing a function to $AND_k$ essentially implies that there exists 
a set of variables $x_1,x_2,\ldots x_k$, such that if they are not all equal to $1$,
then the function outputs $0$. 
In terms of the tree the implication is as follows.
Let the path in the proof of Theorem~\ref{th:2} be $x_{i_1},x_{i_2}, \ldots x_{i_lk}$
such that the function is reduced to $AND_k$ by fixing values of the other variables.
Then while the decision tree is traversed from the root, if any of these $k$ variable's
value is $0$, we move to a node that is out of the path, and then the value of the 
other internal nodes should be so fixed that we always reach a $0$-valued leaf node.

\subsection{\label{subsec:0}Query Friendly Functions}
Having established these results, we characterize a special class of Boolean functions. 
Given any $n$, We call the Boolean functions with $n$ influencing variables that have 
minimum deterministic query complexity as the query friendly functions on $n$ variables. 
We denote the corresponding query complexity of this class of functions as 
$DQ_n$, and its value is calculated as follows.
\begin{lemma}
The value of $DQ_n$ is equal to $\lceil \log (n+1) \rceil$.
\end{lemma}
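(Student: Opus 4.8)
The plan is to pin down $DQ_n$ by bounding it on both sides, showing $DQ_n \geq \lceil \log(n+1) \rceil$ and $DQ_n \leq \lceil \log(n+1) \rceil$ separately, and then using the integrality of deterministic query complexity to conclude equality. Throughout, $DQ_n$ is the minimum of $D(f)$ taken over all Boolean functions $f$ on exactly $n$ influencing variables, so a lower bound must hold for \emph{every} such $f$ while an upper bound only needs a single witness.

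For the lower bound I would argue directly from Lemma~\ref{lem:l0}. Let $f$ be any Boolean function on $n$ influencing variables and write $m = D(f)$. Lemma~\ref{lem:l0} states that a function with deterministic query complexity $m$ can have at most $2^m - 1$ influencing variables, so $n \leq 2^m - 1$, i.e. $2^m \geq n+1$. Taking logarithms gives $m \geq \log(n+1)$, and since $m$ is an integer this forces $m \geq \lceil \log(n+1) \rceil$. As this holds for every $f$ on $n$ influencing variables, in particular for the query friendly ones, we obtain $DQ_n \geq \lceil \log(n+1) \rceil$.

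For the upper bound it suffices to exhibit one function on $n$ influencing variables whose deterministic query complexity equals $k := \lceil \log(n+1) \rceil$. Note that $k$ is characterised by $2^{k-1} \leq n \leq 2^k - 1$, which matches the range used informally in the introduction. I would build the decision tree by hand, starting from the fully-complete binary tree of depth $k$ of Lemma~\ref{lemma:2} (which has room for $2^k - 1$ distinct internal nodes) and populating exactly $n$ of its internal nodes with distinct variables, collapsing the remaining subtrees into labelled leaves while keeping one root-to-leaf path of full length $k$. Since the all-distinct fully-complete tree already accommodates $2^k - 1 \geq n$ variables, and a single length-$k$ path needs only $k \leq 2^{k-1} \leq n$ internal nodes, there is always enough room to place precisely $n$ distinct queried variables at depth at most $k$. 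This yields $D(f) \leq k$, and combined with the lower bound $D(f) \geq k$ we get $D(f) = k$, hence $DQ_n \leq k$.

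The step that needs the most care, and which I expect to be the main obstacle, is the bookkeeping in the upper-bound construction: I must guarantee that all $n$ queried variables are genuinely influencing and pairwise distinct, while simultaneously keeping the depth equal to $k$ so the lower bound is met with equality rather than overshot. Making a queried variable influencing requires that the two subtrees hanging off its node yield different outputs for some completion of the remaining variables, which I would secure by fixing a leaf-labelling convention (for instance, ensuring the two children of each deepest queried node carry opposite labels) so that flipping that variable changes the output along at least one consistent path. Once this convention is in place, the remaining claim that every integer $n$ in $[2^{k-1}, 2^k-1]$ can be realised is a routine counting argument and carries no real difficulty.
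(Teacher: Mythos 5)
Your proof is correct and takes essentially the same route as the paper: the lower bound is exactly the paper's appeal to Lemma~\ref{lem:l0}, and the upper bound is the existence of a depth-$k$ decision tree querying $n$ distinct variables. If anything, you are more careful than the paper, whose proof merely asserts the existence of a witness function; your explicit construction (including the attention to making every queried variable genuinely influencing) is essentially what the paper only supplies later through the function $f_{(n,1)}$ in Section~\ref{subsec:1}.
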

\begin{proof}
We consider any $n$ such that $2^{k-1}-1 < n \leq 2^k-1$.
We have shown in lemma~\ref{lem:l0} that there cannot exist a Boolean function with $n$ variables
that can be evaluated with $k-1$ classical queries.

Since the maximum number of influencing variables that a Boolean function with $k$ query complexity 
has is $2^k-1$ as proven above, there exists a Boolean function with $n$ variables with $D(f)=k$.
Now $\lceil \log (n+1) \rceil=k$, which concludes the proof.
\end{proof}

\begin{corollary}
For $n=2^k-1$, there does not exist any separable query friendly functions.
\end{corollary}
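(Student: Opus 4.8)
The plan is to obtain this corollary as an essentially immediate consequence of Theorem~\ref{th:2}, since the substantive combinatorial work has already been carried out there; what remains is only to match the definition of query friendly functions against the hypotheses of that theorem. First I would use the preceding lemma to pin down the deterministic query complexity of the relevant functions: for $n = 2^k - 1$ we have $DQ_n = \lceil \log(n+1) \rceil = \lceil \log(2^k) \rceil = k$. By definition, a query friendly function on $n = 2^k-1$ influencing variables is therefore precisely a Boolean function $f$ with $2^k - 1$ influencing variables and $D(f) = k$.

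Next I would observe that this is exactly the class of functions addressed by Theorem~\ref{th:2}. Invoking that theorem directly gives $Q_E(f) = k$ for every such $f$, and hence $Q_E(f) = k = D(f)$. Since the definition of separability requires the strict inequality $Q_E(f) < D(f)$, the equality $Q_E(f) = D(f)$ shows that no query friendly function on $2^k - 1$ variables is separable, which is the claim.

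The only point meriting explicit verification is that the class of query friendly functions for $n = 2^k - 1$ coincides with the functions governed by Theorem~\ref{th:2}; this follows because the value $DQ_{2^k-1} = k$ together with Lemma~\ref{lem:l0} forces every such function to be realized by a $k$-depth complete decision tree in which each variable is queried exactly once. Consequently there is no genuine obstacle here: the corollary is a restatement of Theorem~\ref{th:2} in the language of query friendliness and separability, and the main difficulty was already absorbed into the earlier reduction to $AND_k$.
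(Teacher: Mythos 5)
Your proposal is correct and follows exactly the paper's own argument: compute $DQ_n = \lceil \log(2^k) \rceil = k$ from the preceding lemma, identify query friendly functions on $2^k-1$ variables with the hypotheses of Theorem~\ref{th:2}, and conclude $Q_E(f) = D(f) = k$, ruling out separability. No gaps; this is the same one-line deduction the paper gives, with the identification step made slightly more explicit.
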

\begin{proof}
For $n=2^k-1$, we have $DQ_n=k$.
We have shown in Theorem ~\ref{th:2}  that any function $f$ with $2^k-1$ influencing variables
and $D(f)=k$ has $Q_E(f)=k$.
\end{proof}

Now let us provide some examples of such functions where the deterministic classical and exact quantum query complexities
are equal.
\begin{itemize}
\item $k=2, n= 2^k - 1 = 3, Q_E(f) = D(f) = 2$: the function is $f = (x_1 \op 1) x_2 \op x_1 x_3= x_1x_2 \op x_1x_3 \op x_2$.

\item $k=3, n= 2^k - 1 = 7, Q_E(f) = D(f) = 3$: the function is $f=(x_1\op 1)((x_2\op 1) x_4 \op x_2 x_5) \op x_1 ((x_3 \op 1) x_6 \op x_3 x_7) = x_1x_2x_4 \op x_1x_2x_5 \op x_1x_4 \op x_2x_4 \op x_2x_5 \op x_4 \op x_1x_3x_6 \op x_1x_3x_7 \op x_1x_6$.
\end{itemize}
Next we move to a generalization when $n \neq 2^k-1$.

\subsection{Extending the result for $n \neq 2^k - 1$}
\label{subsec:1}
We first identify a generic set of non-separable query friendly functions where $2^{k-1}-1 < n < 2^k-1$
and then show that no query friendly function on $n=2^k-2, k>2$~influencing variables are separable.
We define such a set of non-separable query friendly functions for
$2^{k-1}-1 < n < 2^k-1$ using the decision tree model again.
We construct a decision tree of depth $k$ such that the 
first $k-1$ levels are completely filled and every variable occurs exactly once in the decision tree.
That implies there are $n-2^{k-1}+1$ nodes in the $k$-th level.
Let us denote the corresponding function as $f_{(n,1)}$. 
\begin{theorem}
The Boolean function $f_{(n,1)}$ on $n$ influencing variables has $D(f_{(n,1)})=Q_E(f_{(n,1)})$.
\end{theorem}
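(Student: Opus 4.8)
The plan is to mirror the strategy of Theorem~\ref{th:2}: establish $D(f_{(n,1)})=Q_E(f_{(n,1)})$ by showing the lower bound $Q_E(f_{(n,1)}) \geq D(f_{(n,1)})$, since the inequality $Q_E(f) \leq D(f)$ holds for every Boolean function. First I would pin down the classical side. Because $f_{(n,1)}$ is built from a depth-$k$ decision tree whose first $k-1$ levels are completely filled and in which every variable appears exactly once, Lemma~\ref{lem:l0} and the counting argument preceding it give $D(f_{(n,1)}) = k = \lceil \log(n+1) \rceil$; the tree realizes depth $k$, and no function on $n > 2^{k-1}-1$ influencing variables can be evaluated in $k-1$ queries. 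So the target reduces to proving $Q_E(f_{(n,1)}) \geq k$.

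For the quantum lower bound I would use the same hardness-by-restriction technique as before: exhibit a fixing of a subset of the variables that reduces $f_{(n,1)}$ to $AND_k$ (or $OR_k$), which forces $Q_E(f_{(n,1)}) \geq Q_E(AND_k) = k$ by the cited value $Q_E(AND_k)=Q_E(OR_k)=k$~\cite[Table 1]{ANDk}. The key structural observation is that even though the $k$-th level is only partially filled (it has $n-2^{k-1}+1$ nodes rather than $2^{k-1}$), at least one internal node survives at level $k$, and I only need a single root-to-leaf path of length $k$ to build the reduction. Concretely, I would select an internal node $x_{i_k}$ present at the $k$-th level and let $x_{i_1}, x_{i_2}, \ldots, x_{i_k}$ be the root-to-$x_{i_k}$ path, following the convention $val(x_{i_t},1)=x_{i_{t+1}}$. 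Fixing every off-path variable so that any deviation from this path leads to a $0$-leaf (and setting the two children of $x_{i_k}$ to $0$ and $1$ appropriately) collapses the function to $\prod_{t=1}^{k} x_{i_t}$, i.e.\ $AND_k$, exactly as in the itemized cases of Theorem~\ref{th:2}. Combining $Q_E(f_{(n,1)}) \geq k$ with $Q_E(f_{(n,1)}) \leq D(f_{(n,1)}) = k$ yields the claimed equality.

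The main obstacle, and the only place this argument genuinely differs from Theorem~\ref{th:2}, is the incomplete $k$-th level. In the fully-complete tree of Theorem~\ref{th:2} the distinguished path $s_{max}$ was guaranteed to reach a level-$k$ internal node along the all-ones branch; here I must verify that a length-$k$ path to a level-$k$ internal node still exists despite the missing nodes, and that the off-path variables can genuinely be fixed to send every escape branch to a $0$-leaf. Since the first $k-1$ levels are completely filled and there is at least one node at level $k$ (as $n \geq 2^{k-1}$), such a path exists; the labeling of the omitted leaf subtrees is free, so the required $0$-fixing is always available. I would therefore state explicitly that the choice of $x_{i_k}$ only requires one surviving level-$k$ node, which the construction guarantees, and otherwise the reduction proceeds verbatim.
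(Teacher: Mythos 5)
Your proposal follows essentially the same route as the paper: establish $D(f_{(n,1)})=k$ by the variable-counting bound, then pick a surviving internal node at level $k$, take its root path, fix all off-path variables to steer every deviation to a $0$-leaf, and conclude $Q_E(f_{(n,1)})\geq Q_E(AND_k)=k$. The one point you should repair is the claim that the path can be taken ``following the convention $val(x_{i_t},1)=x_{i_{t+1}}$'': the construction of $f_{(n,1)}$ does not specify where the $n-2^{k-1}+1$ level-$k$ nodes sit, so the unique root path to your chosen node $x_{i_k}$ may use $0$-edges as well as $1$-edges, and your paragraph-three argument only establishes that \emph{some} root-to-level-$k$ path exists, not an all-ones one. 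The paper avoids this by writing the path with arbitrary edge labels $d_t$ (i.e.\ $val(x_{i_t},d_t)=x_{i_{t+1}}$), so the restriction collapses the function to $(x_{i_1}\oplus\overline{d_1})(x_{i_2}\oplus\overline{d_2})\cdots(x_{i_k}\oplus\overline{d_k})$, which is isomorphic (PNP equivalent) to $AND_k$ rather than equal to it; since isomorphism preserves $Q_E$, the lower bound $Q_E(f_{(n,1)})\geq k$ follows all the same. With that substitution (and noting that the values of the two leaf children of $x_{i_k}$ are dictated by the function, not chosen by you --- but either assignment yields an $AND_k$- or $OR_k$-type reduction, both of complexity $k$), your argument matches the paper's proof.
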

\begin{proof}
This $k$-depth decision tree constructed for any $n$ such that $2^{k-1}-1 < n <2^k-1$ has the following properties.
\begin{itemize}
\item The corresponding function has deterministic query complexity equal to $k$. This is because the number 
of influencing variables in the function is more than the number of variables that a Boolean function 
with deterministic query complexity $k-1$ can have.

\item There is at least one internal node at $k$-th level. let that node be called $x_{i_k}$.
Let the root to $x_{i_k}$ path be $x_1,x_{i_2},x_{i_3}, \ldots, x_{i_{k-1}},x_{i_k}$
such that $val(x_1,d_1)=x_{i_2},val(x_{i_2},d_2)=x_{i_3}$ and so on. 
Applying the reduction used in Theorem~ \ref{th:2} the corresponding Boolean function can be reduced to
the function $(x_1 \op \overline{d_1})(x_{i_2} \op \overline{d_2})\ldots (x_{i_k} \op \overline{d_k})$ 
which is isomorphic to $AND_k$. 
(Note that $\overline{d_i} = 1 \op d_i$, i.e., the complement of $d_i$.)
This implies that $Q_E(f_{n,1}) \geq k$. We also know $D(f_{n,1})=k$,
and therefore the exact quantum query complexity of the function is $k$. 
Figure \ref{fig:6} gives an example of a function in $f_{(5,1)}$.
\end{itemize}
\end{proof}
\begin{figure}
\begin{center}
\includegraphics[scale=0.20]{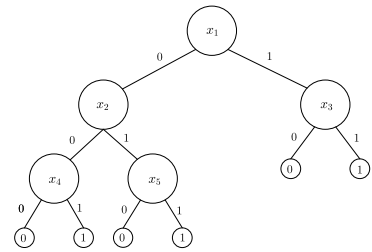}
\end{center}
\caption{Decision Tree corresponding to $f_{(5,1)}$}
\label{fig:6}
\end{figure}
The result is thus a generalization when $2^{k-1}-1<n<2^k-1$, in identifying a class of functions where the separation 
between classical and quantum domain is not possible.
 
We now show that in fact for $k>2$, all query friendly functions with $2^k-2$ variables are non-separable.
\begin{theorem}
Let $f$ be a query friendly function on $n=2^k-2$ variables, such that $k>2$. Then $D(f)=Q_E(f)$.
\end{theorem}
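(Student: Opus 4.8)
The plan is to prove the nontrivial inequality $Q_E(f)\ge k$, since $D(f)=k$ is immediate: for $k>2$ we have $2^{k-1}-1<n=2^k-2<2^k-1$, so $DQ_n=k$ and $Q_E(f)\le D(f)=k$. As in Theorem~\ref{th:2}, I would establish $Q_E(f)\ge k$ by exhibiting a restriction of $f$ (a fixing of some variables) that is isomorphic to $AND_k$ or $OR_k$; since restricting cannot increase $Q_E$ and $Q_E(AND_k)=Q_E(OR_k)=k$, this gives the bound. Fix an optimal depth-$k$ decision tree $T$ for $f$. Every one of the $n=2^k-2$ influencing variables must occur as an internal node, while a depth-$k$ tree has at most $2^k-1$ internal nodes; hence $T$ has either exactly $2^k-2$ internal nodes, with every variable appearing once (Case A), or exactly $2^k-1$ internal nodes, with exactly one variable repeated and all others appearing once (Case B).

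The reduction of Theorem~\ref{th:2} picks a root-to-level-$k$ path $x_{i_1},\dots,x_{i_k}$ and fixes the off-path variables so that every deviation from the path reaches a $0$-leaf (giving $AND_k$) or a $1$-leaf (giving $OR_k$). The genuinely new difficulty, compared with the constructed function $f_{(n,1)}$, is that here $f$ is arbitrary, so the leaf labels are not under our control and a deviation subtree could be \emph{monochromatic}, blocking the reduction. The key observation that tames this is: if a subtree of $T$ computes a constant, then each of its internal variables is irrelevant inside that subtree and must therefore recur elsewhere in $T$ to remain influencing. In Case A no variable recurs, so the only constant deviation subtrees are single leaves; in Case B only one variable recurs, so a constant deviation subtree can consist of at most a single level-$k$ internal node, which must be the repeated variable carrying two equal leaves, and there is at most one such node (its other copy must have distinct children, or the variable would not be influencing).

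With this in hand the two cases close by a counting argument that uses $k>2$. In Case A a short count shows $T$ is the complete depth-$k$ tree with one level-$k$ internal node replaced by a leaf, so levels $1,\dots,k-1$ are full; since $k>2$ there are at least two level-$(k-1)$ nodes, hence at least one of them has two internal children, and I take $w$ to be one of these children. Then the sibling subtree of $w$ and every deviation subtree at levels $1,\dots,k-2$ contain an internal node, hence are non-constant and possess a $0$-leaf, while the two children of $w$ are the distinct leaves $0,1$; fixing the off-path variables to reach $0$-leaves yields a restriction isomorphic to $AND_k$. In Case B, since $k>2$ there are at least four level-$k$ nodes and at most one constant one, so I may pick a level-$(k-1)$ node both of whose children $w,u$ are non-constant; taking the path through $w$, the sibling subtree $u$ and all higher deviation subtrees are non-constant and contribute $0$-leaves, again producing $AND_k$. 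Either way $Q_E(f)\ge k$, so $Q_E(f)=k=D(f)$.

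The step I expect to be the main obstacle is exactly the control of arbitrary leaf labels: unlike in $f_{(n,1)}$ we cannot design the leaves to make the deviations uniform, so the whole argument hinges on showing monochromatic deviation subtrees are too scarce to obstruct the reduction. This scarcity is precisely a consequence of $n$ lying only two below the maximum $2^k-1$, together with $k>2$, which guarantees enough non-constant subtrees and enough slack to route around the rare constant one. The hypothesis $k>2$ is essential, as the case $k=2$, $n=2$ already admits the separable function $x_{1}\op x_{2}$.
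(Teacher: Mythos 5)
Your high-level strategy is the same as the paper's: split on whether the optimal depth-$k$ tree has $2^k-2$ or $2^k-1$ internal nodes, and in each case restrict $f$ to a copy of $AND_k$, using that restrictions cannot increase $Q_E$ and $Q_E(AND_k)=k$. Your Case A is correct and complete; in fact your scarcity-of-constant-subtrees observation handles the arbitrary leaf labels more explicitly than the paper does for that case (the paper simply identifies such trees with $f_{(n,1)}$).

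Case B, however, has a genuine gap: you never control where the repeated variable sits relative to the chosen path, and your selection rule (``any level-$(k-1)$ node $P$ with both children non-constant'') admits choices for which the construction fails. Concretely, suppose both children $w,u$ of $P$ query the repeated variable $z$ with leaf pairs $\{0,1\}$ each; both are then non-constant, so your rule allows this $P$. Writing $y_1,\dots,y_{k-1}$ for the path variables down to $P$, $d_t$ for the path directions, and $c_w,c_u$ for the directions of the $1$-leaves of $w,u$, the restriction you build equals
\begin{equation*}
\prod_{t=1}^{k-2}\,[y_t=d_t]\;\cdot\;\Bigl([y_{k-1}=d_{k-1}][z=c_w]\;\vee\;[y_{k-1}\neq d_{k-1}][z=c_u]\Bigr).
\end{equation*}
If $c_w=c_u$ this is an AND of only $k-1$ distinct literals; if $c_w\neq c_u$ it is an AND of $k-2$ literals conjoined with the single parity constraint $z\oplus y_{k-1}=c_w\oplus d_{k-1}$. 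Either function is exactly computable with $k-1$ quantum queries (query the AND literals individually, and the parity with one query), so this restriction certifies only $Q_E(f)\geq k-1$, not $Q_E(f)\geq k$. The same root cause bites whenever $z$ occurs once on the path and once at $u$, or twice on the path: a deviation subtree that queries a \emph{path} variable cannot simply be ``fixed to reach a $0$-leaf,'' since one of its inputs remains free; non-constancy alone is not the right hypothesis. The theorem survives because for $k>2$ a good choice always exists (pick $P$ and $w$ so that the path and the sibling avoid this interaction, and route deviations around the off-path occurrence of $z$), but supplying that selection and routing argument is precisely the missing content --- and it is what the paper's case analysis is organized to do: it either runs the path through the duplicated variable when its two occurrences are the two children of the root, or builds a path consisting entirely of once-queried variables and explicitly dodges a level-$k$ occurrence of the duplicated variable by fixing its off-path parent.
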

\begin{proof}
There exists a decision tree $T_f$ of depth-$k$ that evaluates $f$.
Since $f$ has $2^k-2$ variables then $T_f$ can be of the following forms:
\begin{enumerate}
\item $T_f$ has $2^k-2$ internal nodes and each of the nodes query a unique variable. Each tree of this type
corresponds to a function of the type $f_{(n,1)}$ and therefore is non-separable. 

\item $T_f$ has $2^k-1$ internal nodes and there exists two nodes in the tree which query the same variable.
\end{enumerate}
We analyze the different structures of $T_f$ corresponding to the second case.
Let the root node queries a variable $x_1$.
Then the following cases can occur.

\begin{caseof}

\scase{Both the children of $x_1$ query the same variable $x_2$.}
{
Let the two nodes be represented by $x_2^0$ and $x_2^1$
We choose a $k$-depth path $x_1,x_2^1,x_3,\ldots x_k$ such that
$val(x_i,1)=x_{i+1}, 1 \leq i \leq k-1$.
Let us assume for simplicity $val(x_k,0)=0$ and $val(x_k,1)=1$.
For all vertices $x_t$ on the $k-th$ level such that $x_t \neq x_{k}$
we fix the value of the variable to $d_t$ such that $val(x_t,d_t)=0$.
This construction reduces the function $f$ to the $AND_k$ function,
implying $Q_E(f) \geq k$. As we know $D(f)=k$, this implies $Q_E(f)=k$.

}
\scase{ At most one of the children of $x_1$ query a variable that appears more than once in the decision tree.}
{
In this case there exists a $k$-depth path consisting of nodes querying $x_1,x_2, \ldots x_k$
such that each of these variables appear only once in the tree such that 
$$val(x_i,d_i)=x_{i+1}, 1 \leq i \leq k-1 \text{ and  } val(x_k,d_k)=1$$.

Now let the variable that is queried twice be $x_{dup}$ and
the nodes querying the variable be denoted as $x_{dup}^1$ and $x_{dup}^2$. 
If at most one of these nodes is in the $k$-th level then 
we can simply follow the method of the first case to reduce the function into 
$\displaystyle \prod_{i=1}^{k} (x_i \op \bar{d_i})$.

If both the node querying $x_{dup}$ are in the $k$-th level, then at least 
one of their parent nodes do not belong to the set $\{x_1,x_2 \ldots x_{k} \}$.
Let the variable being queried by that node be $x_{par}$ and it is parent of at-least $x_{dup}^1$.
We fix the value of $x_{par}$ to be $c$ such that $val(x_{par},\bar{c})=x_{dup}^1$.
Now we again fix all the value of the variables $x_t$ on the $k$-th level
except $x_{dup}^1$ and $x_{k}$ in the same way as in case 1 
to reduce the function to $\displaystyle \prod_{i=1}^{k} (x_i \op \bar{d_i})$.

The function $\displaystyle \prod_{i=1}^{k} (x_i \op \bar{d_i})$ is isomorphic to
the $AND_k$ function and thus the proof is completed.

}
\end{caseof}
\end{proof}

\section{Parity Decision Trees and Separation results}
\label{sec:3}
We now explore the parity decision tree model introduced in \cite{exact}.
This model is constructed using the fact that in the exact quantum query model,
the value of $x_{i_1} \op x_{i_2}$ can be evaluated using a single query.

A parity decision tree is similar to a deterministic decision tree.
But while in a decision tree a query can only return the value of a variable $x_i$,
in a parity decision tree a query can return either the value of a variable $x_i$ 
or the parity of two variables $x_{i_1} \oplus x_{i_2}$.
A parity decision tree represents a quantum algorithm in which the oracle is queried values of type
$x_{i_1}$ and $x_{i_1} \oplus x_{i_2}$. 
In fact in this case the work qubits can be measured after each query and reset to a default state.

Let $f$ be a Boolean function that can be expressed as a $k$-depth decision tree 
in which every internal node either queries a variable $x_i$ 
or the parity of two variables, $x_{i_1} \op x_{i_2}$.
We can then say that $Q_E(f) \leq k$.
Figure \ref{fig:paritytree1} gives an example of a parity decision tree.

\begin{figure}[!htb]
\begin{center}
\includegraphics[scale=0.27]{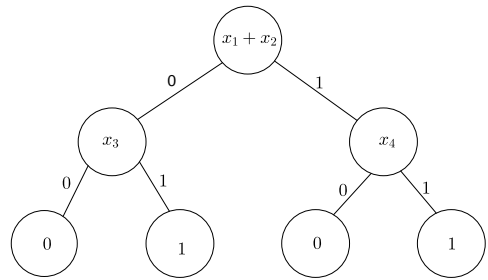}
\end{center}
\caption{Example of a parity decision tree}
\label{fig:paritytree1}
\end{figure}

The corresponding Boolean function is $(x_1 \op x_2)x_4 \op (x_1 \op x_2 \op 1)x_3$, with 
deterministic query complexity $3$ and exact quantum query complexity $2$.

A $k$-depth parity decision tree can only evaluate a function of algebraic degree less than or equal to $k$,
whereas there may exist a Boolean function of degree higher than $k$ that can be evaluated using 
$k$ queries. Thus, although this model does not completely capture the power of the quantum 
query model, we use the generalized structure of this model to find separable query friendly 
functions for certain values of $n$.

We say a parity decision tree $T$ completely describes a Boolean function $f$
if $T$ is a parity decision tree with the minimum depth (say $depth_f$)
among all parity decision trees that represent $f$ and
$Q_E(f)$ is equal to $depth_f$. 

\begin{lemma}
\label{lemx1}
Given any $k$ there exists a Boolean function $f$ with $2^{k+1}-2$ variables such that $Q_E(f)=k$.
\end{lemma}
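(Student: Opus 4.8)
The plan is to prove the statement by exhibiting an explicit function and bounding $Q_E(f)$ from both sides, mirroring the strategy of Theorem~\ref{th:2} but with parity queries in place of single-variable queries. For the upper bound I would construct $f$ directly from a parity decision tree. Take a fully-complete binary tree of depth $k$, which has exactly $2^k-1$ internal nodes. Assign to each internal node a query of the form $x_{i_1} \op x_{i_2}$, using a \emph{fresh} pair of variables at every node, so that no variable is reused anywhere in the tree. Since each of the $2^k-1$ nodes consumes two distinct variables, this uses $2(2^k-1)=2^{k+1}-2$ variables in total. Labelling the leaves so that the two subtrees hanging below each internal node compute genuinely different functions guarantees (by the same argument as in Lemma~\ref{lem:l0}, applied to the parity values rather than single bits) that each of these $2^{k+1}-2$ variables is influencing. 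Because this is a depth-$k$ parity decision tree, the discussion preceding the lemma immediately gives $Q_E(f)\le k$.

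For the lower bound I would adapt the reduction-to-$AND_k$ argument of Theorem~\ref{th:2}. Fix a root-to-leaf path of length $k$, and let its internal nodes query the parities $p_t=x_{a_t}\op x_{b_t}$ for $1\le t\le k$; by construction the $2k$ variables occurring in these parities are all distinct. I would then restrict $f$ by fixing every variable that does \emph{not} lie on this path. Since all off-path nodes query parities built from variables disjoint from the path variables, I can set those variables to steer every off-path branch deterministically toward a chosen leaf, fixing all such leaves to $0$ while keeping the terminal leaf of the path equal to $1$. Under this restriction the output is $1$ exactly when all the path parities evaluate to $1$, so $f$ reduces to $\prod_{t=1}^{k}(x_{a_t}\op x_{b_t})$. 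Finally setting $x_{b_t}=0$ turns each parity into the single variable $x_{a_t}$, so the fully restricted function is precisely $AND_k$ in the variables $x_{a_1},\dots,x_{a_k}$. Since restricting variables to constants can only decrease the exact quantum query complexity (any $t$-query exact algorithm for $f$ specialises to a $t$-query algorithm for the restriction by hardwiring the fixed inputs into the oracle), we obtain $Q_E(f)\ge Q_E(AND_k)=k$ by \cite[Table 1]{ANDk}, and combined with the upper bound this yields $Q_E(f)=k$.

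I expect the main obstacle to be the bookkeeping in the lower bound: one must argue carefully that the off-path variables really can be fixed so that \emph{every} broken path deterministically reaches a $0$-leaf without clashing with the choices made for the path variables. This is exactly where using fresh, pairwise-disjoint variables at every node pays off, since it removes any interaction between the path parities and the routing of the off-path branches; the corresponding step in Theorem~\ref{th:2} is simpler only because there each node queries a single variable. A secondary point worth checking is the claim that all $2^{k+1}-2$ variables are influencing, which requires choosing the leaf labels so that the function is sensitive to each parity, and hence to each of its two variables.
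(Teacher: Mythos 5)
Your proposal is correct and takes essentially the same route as the paper: the paper's proof constructs exactly this fully-complete depth-$k$ parity decision tree with disjoint parity queries on $2(2^k-1)$ fresh variables, gets $Q_E(f)\le k$ from the tree depth, and gets $Q_E(f)\ge k$ by reduction to $AND_k$. The only difference is that the paper omits the details of the reduction (deferring to the argument of Theorem~\ref{th:2}), whereas you spell them out, including the routing of off-path branches to $0$-leaves and the final substitution $x_{b_t}=0$ that collapses each path parity to a single variable.
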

\begin{proof}
This proof follows directly from the definition of parity decision trees and the proof of 
existence of a Boolean function with $2^k-1$ variables with $D(f)=k$.
Again we construct a $k$ depth complete parity decision tree such that every internal 
node is a query of the form $x_{i_1} \op x_{i_2}$ such that no variable appears twice 
in the tree. This tree represents a Boolean function $f$ of $2(2^k-1)$ variables and 
inherently $Q_E(f) \leq k$. 
This function can also be reduced to the $AND_k$ function which implies $Q_E(f) \geq f$.
This implies $Q_E(f)=k$.
We skip the proof of reduction to avoid repetition.

This is also the maximum number of influencing variables that a function $f$ can have
so that $Q_E(f)=k$ and $f$ can be completely described using parity decision trees.
This can be proven in the same way as in lemma \ref{lem:l0} and we do not repeat it
for brevity.
\end{proof}

We now prove some observations related to separability for a broader class of
functions and then explore separability in query friendly functions.
\begin{theorem}
If $n \neq 2^k-1$ for any $k$, then there exists a Boolean function for which $Q_E(f)< DQ_n$.
\end{theorem}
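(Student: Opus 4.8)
The plan is to produce, for each $n \neq 2^k - 1$, a query friendly function $f$ on exactly $n$ influencing variables that is completely described by a parity decision tree of depth one less than $DQ_n$, so that the separation falls out of the parity tree directly. Writing $k = DQ_n = \lceil \log(n+1) \rceil$, the defining inequality $2^{k-1}-1 < n \leq 2^k-1$ together with the hypothesis $n \neq 2^k - 1$ pins $n$ into the interval $2^{k-1} \leq n \leq 2^k - 2$. Because $n > 2^{k-1}-1$, Lemma~\ref{lem:l0} forces $D(f) \geq k$ for any function on $n$ influencing variables, hence $D(f) = DQ_n = k$ and $f$ is automatically query friendly; the separation $Q_E(f) < DQ_n$ will then follow as soon as I exhibit a depth-$(k-1)$ parity decision tree for $f$, since such a tree yields $Q_E(f) \leq k-1 < k = DQ_n$ by the parity-tree model preceding Lemma~\ref{lemx1}.

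First I would fix the skeleton exactly as in the proof of Lemma~\ref{lemx1}: take a fully-complete binary tree of depth $k-1$, which has $2^{k-1}-1$ internal nodes, and let each internal node query either a single variable $x_i$ or a parity $x_{i_1}\oplus x_{i_2}$, with all queried variables distinct. If $m$ of the nodes are parity queries and the remaining $2^{k-1}-1-m$ are single-variable queries, the number of distinct variables appearing in the tree is $2m + (2^{k-1}-1-m) = 2^{k-1}-1+m$. Choosing $m = n - 2^{k-1} + 1$ makes this count equal to $n$, and the bounds $2^{k-1} \leq n \leq 2^k-2$ translate into $1 \leq m \leq 2^{k-1}-1$, so the choice lands within the available $2^{k-1}-1$ internal nodes. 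This fixes every query; what remains is to label the $2^{k-1}$ leaves.

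The crux, and the step I expect to be the main obstacle, is to choose the leaf labels so that \emph{every} one of the $n$ variables is genuinely influencing rather than merely present as a query. The key observation is the following local criterion: at an internal node $v$ with query $q_v$ and subtrees $T_0,T_1$ (whose variable sets are disjoint), if $T_0$ and $T_1$ are not both the same constant function, then both variables occurring in $q_v$ are influencing, because flipping either one toggles the queried value, switches subtrees, and one may fix the remaining variables to a pair of inputs on which $T_0$ and $T_1$ disagree; conversely any variable influencing inside $T_0$ (resp.\ $T_1$) stays influencing in $f$ by entering that subtree and reusing its witness. Thus it suffices to guarantee that every subtree of depth at least $1$ is non-constant. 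I would secure this with the explicit labelling that assigns, to each of the $2^{k-2}$ bottom-level internal nodes, one child leaf $0$ and the other child leaf $1$. Then every subtree of depth $\geq 1$ contains such a bottom node and is therefore non-constant, so at each internal node both subtrees are non-constant and the criterion applies; the bottom variables are influencing since their two leaf children differ. An induction on subtree depth then certifies that all $n$ variables influence $f$.

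Assembling the pieces, the resulting $f$ has exactly $n$ influencing variables, hence $D(f) = k = DQ_n$ and $f$ is query friendly, while the depth-$(k-1)$ parity decision tree witnesses $Q_E(f) \leq k-1$; therefore $Q_E(f) < DQ_n$, as claimed. The only delicate point is the influencing-variable bookkeeping of the previous paragraph; the counting argument that realizes exactly $n$ variables and the quantum upper bound are immediate from Lemma~\ref{lemx1} and the parity-tree model.
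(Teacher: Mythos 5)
Your construction coincides with the paper's own proof: a fully-complete parity decision tree of depth $k-1$ in which $m=n-2^{k-1}+1$ internal nodes query parities $x_{i_1}\oplus x_{i_2}$ and the remaining nodes query single variables, every variable appearing exactly once, giving $Q_E(f)\le k-1<k=DQ_n$. Your leaf-labelling argument guaranteeing that every queried variable is genuinely influencing is a detail the paper silently assumes, and your local criterion (both subtrees non-constant, disjoint variable sets) is sound.

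One correction: your side claim that $D(f)=DQ_n=k$, so that $f$ is ``automatically query friendly,'' does not follow and is in general false for your construction. Lemma~\ref{lem:l0} yields only $D(f)\ge k$; an upper bound $D(f)\le k$ would require the parity queries to sit at the bottom level (as in the paper's later $f_{(n,2)}$ construction), whereas your tree allows them anywhere, and the deterministic simulation of such a tree can need up to $2(k-1)$ queries. Concretely, for $k=3$, $n=6$ with all three internal nodes querying parities, the function $(x_1\oplus x_2\oplus 1)(x_3\oplus x_4)\oplus(x_1\oplus x_2)(x_5\oplus x_6)$ has $D(f)=4>3=k$, so it is separable but not query friendly. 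This does not damage your proof of the stated theorem, which only requires exhibiting a function on $n$ influencing variables with $Q_E(f)<DQ_n$, and that your argument establishes correctly.
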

\begin{proof}
Let $2^{k-1}-1 < n <2^{k}-1$ for some natural number $k$.
In this case $DQ_n=k$.
However, there exist Boolean functions $f_Q$ with $n$ influencing variables such that $Q_E(f_Q)=k-1$.
We define a generic class of such functions using parity decision trees.
Let $n=2^{k-1}-1+y$.
Then we can always construct a complete parity decision tree of depth $k-1$ with the following constraints:
\begin{itemize}
\item Every variable appears only once in the tree.
\item $y$ internal nodes have query of the form $x_{i_1} \op x_{i_2}$. The rest of the internal nodes
query the value of a single variable.
\end{itemize}
Since $y \leq 2^{k-1}-1$, which is the number of internal nodes in a complete parity decision tree
of depth $k-1$, such a function always exists.  
\end{proof}

However, if $n=2^k-1$ for some $k$, then there not does not exist any Boolean function $f$
that can be completely expressed using the parity decision trees such that $Q_E(f)<DQ_n$.
If $n = 2^k-1$ then $DQ_n=k$ as well and
there does not exist any Boolean function $f_Q$ with $n$ variables 
that can be expressed using parity trees and has $Q_E(f_Q) \leq k-1$. 
This is true as we have already obtained that the Boolean function with maximum number of influencing variables
and depth $k-1$, that can be expressed using parity decision tree is $2^k-2$ 
(putting $k-1$ in place of $k$ in Lemma~\ref{lemx1} above).

Moreover, there does not exist any Boolean function
with $3$ influencing variable such that  
exact query complexity is less than $DQ_3$, which is equal to $2$.
It is interesting to note that
if for some $n=2^k-1$ there exists a Boolean function with $Q_E(f)=k-1$ then there exists separation for
all $n=2^j-1:j>k$. This can be easily proven with induction.
\begin{lemma}
If there exists a function $f_k$ with $2^k-1$ influencing variables such that $Q_E(f)=k-1$,
then there exists a function $f_j$ with $2^j-1$ influencing variables such that $Q_E(f) \leq j-1$ for all $j>k$.
\end{lemma}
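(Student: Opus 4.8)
The plan is to argue by induction on $j$, treating the hypothesized $f_k$ (with $Q_E(f_k)=k-1$) as the base case and doubling the variable count at each step. Since the map $m \mapsto 2m+1$ sends $2^{j}-1$ to $2^{j+1}-1$, the natural construction is to glue two disjoint copies of $f_j$ under a single fresh root variable, exactly mirroring the decision-tree gluing used earlier in the paper.

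Concretely, suppose $f_j$ is a function on $2^j-1$ influencing variables with $Q_E(f_j)\leq j-1$. I would take two copies $g_0$ and $g_1$ of $f_j$ on disjoint variable sets $S_0,S_1$ (each of size $2^j-1$), introduce one new variable $x_0$, and define
$$f_{j+1} = (x_0 \op 1)\, g_0 \op x_0\, g_1.$$
This function has $2(2^j-1)+1 = 2^{j+1}-1$ variables, and I would first verify that all of them are influencing. The variable $x_0$ is influencing because $f_j$ is non-constant, so one can fix $S_0,S_1$ so that $g_0=0$ and $g_1=1$, reducing $f_{j+1}$ to $x_0$. Each variable in $S_0$ is influencing by restricting to $x_0=0$, which reduces $f_{j+1}$ to $g_0$, and each variable in $S_1$ is influencing by the symmetric restriction $x_0=1$.

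For the upper bound on query complexity I would exhibit an exact quantum algorithm that first spends one query to read $x_0$ (a single variable is read exactly by one oracle call) and then, conditioned on the value of $x_0$, runs the optimal exact quantum algorithm for $g_0$ when $x_0=0$ and for $g_1$ when $x_0=1$, each costing at most $j-1$ queries by the inductive hypothesis. The total cost is $1+(j-1)=j=(j+1)-1$, so $Q_E(f_{j+1})\leq j$. Iterating this step starting from $f_k$ produces the desired $f_j$ for every $j>k$, completing the induction.

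The step I expect to require the most care is justifying that this \emph{branch-after-one-query} algorithm is legitimate in the exact quantum model without inflating the query count, i.e., that reading $x_0$ with the first query and then selecting one of two subroutines keeps the overall computation exact and still uses only $j$ queries. This is precisely the regime the parity decision tree model of this section was designed to handle, where the work qubits are measured and reset after each query, so I would invoke that framework directly; alternatively, one can phrase the conditional execution as a single controlled unitary on $x_0$ and appeal to deferred measurement, but the measured-branch description is cleanest and matches the constructions already given.
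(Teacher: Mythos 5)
Your proposal is correct and uses essentially the same construction as the paper: glue two disjoint copies of $f_j$ under a fresh branching variable via the multiplexer $f_{j+1}=(x_0\op 1)g_0 \op x_0 g_1$, query the new variable first, then run the exact algorithm for the selected copy, giving $Q_E(f_{j+1})\leq 1+(j-1)=j$. Your write-up is in fact more careful than the paper's, which omits the verification that all $2^{j+1}-1$ variables are influencing and simply asserts the query bound.
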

\begin{proof}
If there exists a function $f_k$ with the specified property then $f_{k+1}$ can be constructed as follows.
$f_{k+1}=x_{2^{k+1}-1}(f_k(x_1,x_2,\ldots x_{2^k-1}) \op (x_{2^{k+1}-1} \op 1)f_k(x_{2^k},x_{2^k+1},\ldots x_{2^{k+1}-2})$.
It is easy to see $Q_E(f_{k+1}) \leq k$.
Using this construction recursively yields a desired function for any $j>k$.
\end{proof}

We complete the categorization by defining a generalized subclass of Query friendly Boolean functions.
We define this subclass such that a function $f$, belonging to this, has $Q_E(f) = DQ_n-1$.

\subsection{\label{subsec:2}Separable Query Friendly functions}
We construct a generic function for this set of query friendly functions 
using parity decision trees for values of $n$ such that there exists $k, 2^{k-1}-1 < n \leq 2^{k-1}+2^{k-2}-1$.
We first describe the construction using a parity decision tree and then prove the
query complexity values of the function. 

Let us construct a parity decision tree of depth $k-1$ in the following manner. 
The first $k-2$ levels are completely filled, with each internal node querying a single 
variable. All variable appears exactly once in this tree.
Let these variables be termed $x_1, x_2, \ldots, x_{2^{k-2}-1}$. 
In the $(k-1)$-th level, there are $\lceil \frac{n-(2^{k-2}-1)}{2} \rceil$ internal nodes,
with each query being of the form $x_{i_1} \op x_{i_2}$.
(In case $n-2^{k-2}+1$ is odd, there is one node querying a single variable).
Then if $n=2^{k-1}$ there are $2^{k-3}+1$ internal nodes in $(k-1)$-th level
and if n=$2^{k-1}+2^{k-2}-1$ there are $2^{k-2}$ nodes in the $(k-1)$-th level,
resulting in a fully-complete binary tree of depth $k-1$.
We denote this generic function as $f_{(n,2)}$.

\begin{theorem}
The Boolean function $f_{(n,2)}$ on $n$ influencing variables has $D(f)=DQ_n$ and $Q_E(f)=DQ_n-1.$
\end{theorem}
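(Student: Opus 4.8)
The plan is to establish the two equalities separately. Throughout, write $k = DQ_n$; since $2^{k-1}-1 < n \le 2^{k-1}+2^{k-2}-1$ we have $2^{k-1} \le n < 2^k-1$, so $k-1 < \log(n+1) < k$ and hence $DQ_n = \lceil \log(n+1)\rceil = k$. The upper bound $Q_E(f_{(n,2)}) \le k-1$ is immediate: the construction exhibits $f_{(n,2)}$ as a parity decision tree of depth $k-1$, and by the defining property of such trees $Q_E$ is at most the depth. The whole result therefore reduces to proving the matching lower bound $Q_E(f_{(n,2)}) \ge k-1$ together with $D(f_{(n,2)}) = k$.

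For $D(f_{(n,2)}) = k$, the lower bound $D \ge k$ is forced by the variable count: $f_{(n,2)}$ has $n > 2^{k-1}-1$ influencing variables, and by Lemma~\ref{lem:l0} any function with more than $2^{k-1}-1$ influencing variables has $D > k-1$. For the upper bound $D \le k$, I would convert the depth-$(k-1)$ parity tree into an ordinary decision tree: on every root-to-leaf path the first $k-2$ nodes are already single-variable queries, and the one node at level $k-1$ is either a single-variable query or a parity $x_{a}\op x_{b}$, which can be simulated by the two ordinary queries to $x_a$ and $x_b$. Thus each path uses at most $(k-2)+2 = k$ ordinary queries, giving a deterministic tree of depth $k$, so $D(f_{(n,2)}) = k = DQ_n$.

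For $Q_E(f_{(n,2)}) \ge k-1$, I would mimic the reduction of Theorem~\ref{th:2}, restricting $f_{(n,2)}$ to a copy of $AND_{k-1}$ (whose exact quantum complexity is $k-1$ by~\cite{ANDk}) and using that fixing input variables can only decrease $Q_E$. Fix a root-to-leaf path $x_{i_1}, \ldots, x_{i_{k-2}}, q^\ast$ that follows the $1$-branch at each step and ends at an internal node $q^\ast$ of level $k-1$; here $x_{i_1}, \ldots, x_{i_{k-2}}$ are single-variable queries and $q^\ast$ is a parity query $x_{a}\op x_{b}$ (the single-variable case is only easier). Setting $x_b := 0$ turns $q^\ast$ into the single variable $x_a$, and the $k-1$ literals $x_{i_1}, \ldots, x_{i_{k-2}}, x_a$ are to play the role of the inputs of $AND_{k-1}$. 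Because all $n$ variables of $f_{(n,2)}$ are influencing, each level-$(k-1)$ internal node has its two leaf children distinct and thus possesses a $0$-child; after possibly swapping the $AND_{k-1}/OR_{k-1}$ roles according to the leaf values at $q^\ast$ (exactly as in the two cases of Theorem~\ref{th:2}), I fix every off-path variable so that any deviation from the chosen path is steered to a $0$-leaf, leaving the restriction equal to $x_{i_1}\wedge\cdots\wedge x_{i_{k-2}}\wedge x_a$.

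The delicate point, and the main obstacle, is that unlike the fully-complete trees of Theorem~\ref{th:2} the $(k-1)$-th level of $f_{(n,2)}$ is only partially filled, so a deviation could in principle reach a ``shallow'' leaf hanging directly off a level-$(k-2)$ node whose value I cannot control. I would resolve this at the construction stage: since for $n \ge 2^{k-1}$ the number of level-$(k-1)$ internal nodes is $\lceil (n-2^{k-2}+1)/2\rceil \ge 2^{k-3}+1$, which is at least the number $2^{k-3}$ of level-$(k-2)$ nodes, the parity nodes can be placed so that every level-$(k-2)$ node carries at least one internal child. Since levels $1$ through $k-2$ are completely filled, no leaves occur above level $k-1$, so after any deviation one descends through the complete upper levels to a level-$(k-2)$ node, routes it to its internal child, and then uses that node's $0$-child; hence every deviating path reaches a $0$-leaf. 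The restriction is then exactly $AND_{k-1}$, giving $Q_E(f_{(n,2)}) \ge k-1$. Combined with $Q_E \le k-1$ and $D = k$, this yields $Q_E(f_{(n,2)}) = DQ_n - 1$ and $D(f_{(n,2)}) = DQ_n$, completing the argument.
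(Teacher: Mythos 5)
Your route is the same as the paper's: the same three ingredients appear in the same roles, namely $Q_E(f_{(n,2)})\le k-1$ from the depth of the parity tree, $D(f_{(n,2)})=k$ by expanding each level-$(k-1)$ parity query into two classical queries (the paper draws this gadget explicitly) together with the variable-count lower bound of Lemma~\ref{lem:l0}, and $Q_E(f_{(n,2)})\ge k-1$ by restricting to $AND_{k-1}$. The paper compresses the last step by setting one variable of every parity query to $0$ and citing the reduction already carried out for $f_{(n,1)}$ in Section~\ref{subsec:1}; you instead redo that reduction explicitly, and that is the only place where your write-up has a real problem.

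The steering rule you give to handle your self-identified ``delicate point'' --- ``after any deviation one descends through the complete upper levels to a level-$(k-2)$ node, routes it to its internal child, and then uses that node's $0$-child'' --- covers deviations at levels $j<k-2$, which land on internal nodes of the filled levels, but it does not cover the deviation at the last single-variable path node $x_{i_{k-2}}$: flipping that variable lands you directly on the sibling of $q^\ast$ at level $k-1$. Your placement constraint (every level-$(k-2)$ node has at least one internal child) is already satisfied at $q^\ast$'s parent by $q^\ast$ itself, so that sibling may be a hanging leaf, and the influence requirement does not constrain its label (a level-$(k-2)$ node with one internal child and one leaf child has every variable influencing regardless of that leaf's value). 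If that leaf is labelled $1$ while your case analysis at $q^\ast$ demands steering deviations to $0$, the restriction you obtain is $(x_{i_1}\wedge\cdots\wedge x_{i_{k-3}})\wedge(\overline{x_{i_{k-2}}}\vee x_a)$ rather than $AND_{k-1}$, which only certifies $Q_E\ge k-2$; note the $AND/OR$ swap cannot repair this, because as you state it the swap is dictated by the leaves below $q^\ast$, not by this sibling. The fix is already contained in your own counting: since there are at least $2^{k-3}+1$ internal nodes at level $k-1$ and only $2^{k-3}$ parents at level $k-2$, by pigeonhole some level-$(k-2)$ node has \emph{two} internal children, and running the path through that node makes $q^\ast$'s sibling internal, hence steerable to its $0$-child. (Alternatively, choose the steering direction according to the label of the hanging sibling, obtaining $AND_{k-1}$ or its negation-isomorph, or simply stipulate in the construction that all leaves hanging at level $k-1$ are labelled $0$.) With any of these patches your argument is complete and coincides in substance with the paper's.
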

\begin{proof}
If $2^{k-1}-1<n \leq 2^{k-1}+2^{k-2}-1$ then $DQ_n=k$. 
We first prove that $Q_E(f_{(n,2)})=k-1$.
Since there exists a parity decision tree of depth $k-1$,
\begin{align}
Q_E(f_{(n,2)})\leq k-1.
\label{eq:qf21} 
\end{align}
If we fix one of the variables of each query of type $x_{i_1} \op x_{i_2}$ to zero then
the reduced tree corresponds to a non-separable function shown in \ref{subsec:1}
of depth $k-1$,that is the function can be reduced to $AND_{k-1}$.
This implies
\begin{align}
Q_E(f_{(n,2)}) \geq k-1.
\label{eq:qf22} 
\end{align}
Combining \eqref{eq:qf21} and \eqref{eq:qf22} we get $Q_E(f_{(n,2)})=k-1$.

Now we show that $D(f_{(n,2)}) = k$ by converting the parity decision tree to a 
deterministic decision tree of depth $k$.
All the internal nodes of the parity decision tree from level $1$ to level $k-2$ queries a single variable.
The nodes in the $k-1$-th level have queries of the form $x_{i_1} \op x_{i_2}$. 
Each such node can be replaced by a deterministic tree of of depth $2$ in the following way.
Suppose there is a internal node $x_{i_1} \op x_{i_2}$ in the $(k-1)$-th level.

We replace this node with a tree, whose root is $x_{i_1}$.
Both the children of the node queries $x_{i_2}$ and the leaf node values are swapped in 
the two subtrees. Without loss of generality, suppose in the original tree
$val(x_{i_1} \op x_{i_2},0)=0$ and $val(x_{i_1} \op x_{i_2},1)=1$
Then in the root node $val(val(x_1,0),0)=0$ and $val(val(x_1,1),0)=1$
and so on.
Figure \ref{fig:conv} gives a pictorial representation of the transformation.
The resultant deterministic decision tree is of depth $k$ as there is at least $2^{k-3}$ 
node in the $k-1$-th level in the parity decision tree which goes through transformation.
This implies $D(f_{(n,2)}) \leq k$.
We also know that in this case $DQ_n=k$. Combining the two results we get $D(f_{(n,2)})=k$.
\end{proof}

\begin{figure}[!htb]
\begin{center}
\includegraphics[scale=0.26]{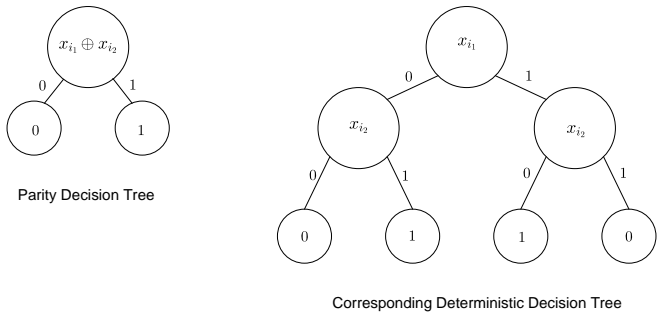}
\end{center}
\caption{Conversion of a node in the parity decision tree to a deterministic decision tree}
\label{fig:conv}
\end{figure}

\begin{remark}\label{remark1}
It should be noted that although we use a particular function $f$ for any $n$ to show the
separation for $Q_E(f)$ and $D(f)$, this immediately means that this separation is established 
for at least the class of functions on $n$ influencing variables that are PNP equivalent to $f$.
\end{remark}

Let us now consider a function of the form $f_{(5,2)}$ described by its ANF as
below:
\begin{align*}
f&=(x_1 \op 1)(x_2 \op x_3) \op x_1(x_4 \op x_5)\\
&=x_1x_2 \op x_1x_3 \op x_1x_4 \op x_1x_5 \op x_2 \op x_3.
\end{align*}
This provides an example for $n=5, D(f)=3$, and $Q_E(f)=2$.
In Figure \ref{fig:7} we present the decision tree for 
this function and the corresponding quantum circuit is provided
in Figure \ref{fig:8}.

\begin{figure}[!htb]
\begin{center}
\includegraphics[scale=0.275]{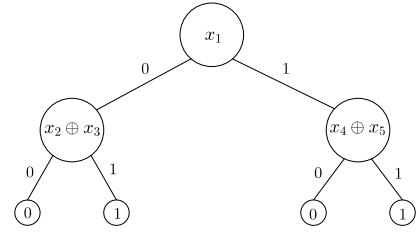}
\end{center}
\caption{Parity Decision Tree corresponding to $f_{(5,2)}$}
\label{fig:7}
\end{figure}

\begin{figure}[!htb]
\begin{center}
\includegraphics[scale=0.275]{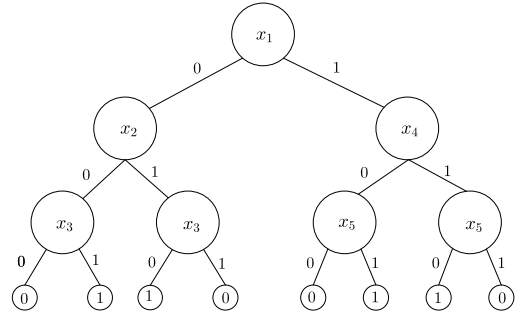}
\end{center}
\caption{Classical Decision Tree corresponding to $f_{(5,2)}$}
\label{fig:7}
\end{figure}

\begin{figure}[!htb]
\begin{center}
\includegraphics[scale=0.25]{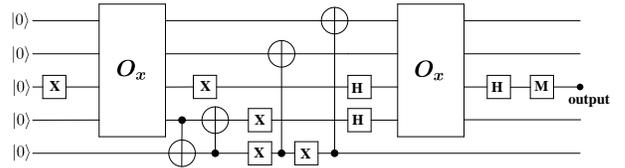}
\end{center}
\caption{Quantum algorithm responding to $f_{(5,2)}$}
\label{fig:8}
\end{figure} 

We now explain for the sake of completeness the difference in working 
of the exact quantum and deterministic algorithm for this function. 

Suppose we want to evaluate this function at the point $(1,0,1,0,1)$.
The deterministic algorithm will first query $x_1$,
and getting its value as $1$ it will then query $x_4$.
Since $x_4$ is $0$ it will query the $x_5$ node which is it's left children 
and then output $1$ as $x_5$ is $1$.

The quantum algorithm will evaluate as follows.
\begin{enumerate}
\itemsep0em 
\item Here $\psi_{start}=\ket{0}\ket{0}\ket{0}\ket{0}\ket{0}$.
\item The first {\sf X} gate transforms it into $\ket{1}_2\ket{0}\ket{0}$

{\em Here $\ket{i}_2$ implies $\ket{a}\ket{b}\ket{c}$ where $abc$ is the binary representation of integer $i$.}
\item Then we get $O_x (\ket{1}_2\ket{0}\ket{0})=\ket{1}_2\ket{x_1}\ket{0}=\ket{1}_2\ket{1}\ket{0}$.
\item The CNOT gates, the not gate and the Hadamard gates ({$\sf H^3$} and {$\sf H^4$}) transform
the state into $(\frac{\ket{4}_2+\ket{5}_2}{\sqrt{2}})\ket{-}\ket{0}$
where $\ket{-}=\frac{\ket{0}-\ket{1}}{\sqrt{2}}$.
\item Now 
\begin{align*}
&O_x(\frac{\ket{4}_2+\ket{5}_2}{\sqrt{2}})\ket{-}\ket{0}= \\
&(\frac{(-1)^{x_4}\ket{4}_2+(-1)^{x_5}\ket{5}_2}{\sqrt{2}})\ket{-}\ket{0}.
\end{align*}
Let this state be $\ket{\phi}$.
\item ${\sf H^3}\ket{\phi}=\frac{1}{2}((-1)^{x_4}+(-1)^{x_5})\ket{4}_2+((-1)^{x_4}-(-1)^{x_5})\ket{5}_2)\ket{-}\ket{0}$
\item since $x_4=0$ and $x_5=1$ we get $\ket{5}_2\ket{-}\ket{0}$
which is equal to $\ket{1}\ket{0}\ket{1}\ket{-}\ket{0}$.
Measuring the third qubit in computational basis we get the desired output, $1$.
\end{enumerate}
This completes the example of separation.

Finally, we conclude this section by proving that our construction
of separable query friendly function indeed
finds such examples for all cases where a parity decision
tree can compute such a function. This completes the 
characterization using parity decision trees.
\begin{theorem}
If $2^{k-1}+2^{k-2}-1 < n \leq 2^k-1$, there does not exist any
separable query friendly function that can be  
completely described using parity decision trees.
\end{theorem}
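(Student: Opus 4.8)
The plan is to argue by contradiction. Suppose that for some $n$ with $2^{k-1}+2^{k-2}-1 < n \leq 2^{k}-1$ there is a query friendly function $f$ on $n$ influencing variables that is separable and completely described by a parity decision tree. Since $DQ_n=k$ in this range, query friendliness forces $D(f)=k$, and separability together with the parity-tree description yields a parity decision tree $T_p$ for $f$ whose depth equals $Q_E(f)<k$. The first step is to pin this depth down to exactly $k-1$: if it were at most $k-2$, then Lemma~\ref{lemx1} (applied with $k-2$ in place of $k$) would allow at most $2^{k-1}-2$ influencing variables, which is smaller than $n$. Hence $Q_E(f)=k-1$ and $T_p$ has depth exactly $k-1$.

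Next I would run a weight count on $T_p$. Assign to each internal node the weight $1$ if it queries a single variable and $2$ if it queries a parity $x_{i_1}\oplus x_{i_2}$, and to each root-to-leaf path the total weight of its internal nodes; this weight is exactly the size of the certificate obtained by fixing the variables read along the path, and also the depth of the deterministic subtree produced by the node-by-node conversion of Figure~\ref{fig:conv}. Since every one of the $n$ variables must occur in $T_p$ and $T_p$ has at most $2^{k-1}-1$ internal nodes, counting variable occurrences gives $n \leq (\#\text{single})+2(\#\text{parity}) \leq (2^{k-1}-1)+(\#\text{parity})$, so $T_p$ contains at least $2^{k-2}+1$ parity nodes. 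I would then establish the sub-claim that a parity decision tree of depth at most $k-1$ in which every root-to-leaf path has weight at most $k$ can contain at most $2^{k-2}$ parity nodes. This follows by a short induction: splitting on whether the root is a single or a parity query yields $P(d,w)=\max\{2P(d-1,w-1),\,1+2P(d-1,w-2)\}$, and the single-query branch dominates, giving $P(k-1,k)=2^{k-2}$. Contrapositively, because $T_p$ has more than $2^{k-2}$ parity nodes, some root-to-leaf path $P$ must have weight at least $k+1$; as its length is at most $k-1$, this path carries $q\geq 2$ parity queries.

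Finally I would convert this heavy path into a deterministic lower bound by restriction. Along $P$, fix every variable not queried on $P$ exactly as in the $AND_k$ reductions of Theorem~\ref{th:2}, so that any deviation from $P$ is routed to a $0$-valued leaf. This restricts $f$ to the conjunction of the conditions read along $P$, i.e. to a function isomorphic to $z_1\cdots z_s\,(y_1\oplus y_2)\cdots(y_{2q-1}\oplus y_{2q})$, where $s$ is the number of single queries and $q\geq 2$ the number of parity queries on $P$. For this AND-type function the all-ones input admits only the full certificate, since each parity factor forces both of its variables; hence its deterministic complexity equals $s+2q$, which is the weight of $P$ and therefore at least $k+1$. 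Because restricting variables cannot increase deterministic query complexity and isomorphism preserves it, $D(f)\geq k+1 > k = DQ_n$, contradicting the query friendliness of $f$ and proving the theorem.

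The hard part will be the routing in the last step: guaranteeing that every subtree hanging off $P$ can in fact be driven to a $0$-valued leaf by fixing its variables, so that the restriction is genuinely the clean AND of the path-conditions. Off-path subtrees that are identically $1$ must be handled separately, for instance by first passing to a minimal-depth, constant-subtree-free representative of $f$ (pruning a constant subtree cannot raise the depth, so it remains $k-1$) before running the count and the reduction. By contrast, the weight bound and the extraction of a path of weight at least $k+1$ are clean and purely combinatorial.
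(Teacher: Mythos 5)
Your counting half is sound: the recursion $P(d,w)=\max\{2P(d-1,w-1),\,1+2P(d-1,w-2)\}$ does give $P(k-1,k)=2^{k-2}$, and the occurrence count correctly forces more than $2^{k-2}$ parity nodes in a depth-$(k-1)$ tree for $f$. The genuine gap is in converting the heavy path into the bound $D(f)\geq k+1$: your weight counts variable \emph{occurrences}, not distinct variables, and nothing in the counting prevents the heavy path from querying the same variable in several of its nodes. Concretely, a depth-minimal parity tree may query $x_1\oplus x_2$ at the root and $x_1$ at a child (this is exactly how one computes $x_1x_2$ in two queries); such a path has weight $3$ but only two distinct variables. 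In general, if variables repeat along the path, the restricted function lives on at most $k$ distinct variables, so its deterministic complexity is at most $k$ and no contradiction with $D(f)=k$ results. Your identity $D=s+2q$ (via certificate complexity) is valid only when the $s+2q$ variable slots on the path are pairwise distinct, and your lemma does not produce a repeat-free heavy path. (Also, the "all-ones input" should be an input making every factor equal to $1$; on the all-ones input each parity factor vanishes. That part is cosmetic.)

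The routing difficulty you flag is likewise more serious than your proposed fix. Pruning identically-constant subtrees does not help when an off-path subtree is non-constant but depends on \emph{path} variables, which must remain free: then no assignment of the non-path variables forces that subtree to $0$ for all values of the free variables, and the restriction is not the clean AND of path conditions, so the bound $s+2q$ does not apply to it. Both obstructions — repeats on the path and off-path subtrees involving path variables — stem from the same phenomenon of recurring variables, which the paper sidesteps entirely by arguing on the deterministic side: from the depth-$k$ deterministic tree it extracts a level-$k$ variable queried exactly once, deduces that the ANF of $f$ contains a degree-$k$ monomial, hence $\deg(f)\geq k$, and then invokes the bound from \cite{exact} that any parity decision tree completely describing $f$ has depth at least $\deg(f)$. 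To repair your route you would need to show that some heavy path is repeat-free \emph{and} that all of its off-path subtrees can be forced to $0$ using only non-path variables; neither is supplied, so as written the proof does not go through.
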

\begin{proof}
Let $f_n$ be a query friendly function on $2^{k-1}+2^{k-2}+1 < n \leq 2^k-1$ influencing variables.
In this case $DQ_n = k$, and hence $D(f_n)=k$. 
Therefore there exists a corresponding $k$-depth decision tree $T_f$.
As we know there are at most $2^k-1$ internal nodes in such a tree
and at least $2^{k-1}+2^{k-2}$ variables that needs to be 
queried at least once.
Therefore there can be at most $2^{k-2}-1$ internal nodes
which query variables that appear more than once in the tree.

This implies that there exists a node in the $k$-th level querying
a variable  $x_{i_{k}^0}$ such that it appears only once in the decision tree. 
We consider the root($x_{i_1}$) to $x_{i_{k}^0}$ path. It is to be noted 
that the root variable needs to be queried only once in any optimal tree.
Let us also assume for simplicity that $val(x_{{i_k}^0},0)=0$ and
\begin{align*}
&val(x_{i_t},d_t)=x_{i_{t+1}}, \ 1 \leq t \leq k-2 \\
&val(x_{i_{k-1}},d_{k-1})=x_{i_{k}^0}
\end{align*}

Let us now define the following sets of variables:

\begin{align*}
& W_j \subseteq \{ x_1,x_2,\ldots x_n \} \\
& X_j=W_j \cup \{ x_{i_{k}^0} \} \\
&Y_j \subseteq (\{ x_1,x_2,\ldots x_n \} \setminus \{ x_{i_{k}^0} \} ) \\
&\text{where}~ 1 \leq j \leq k
\end{align*}
Let $g_j \text{ and }h_j,1 \leq j \leq k$ be functions with influencing variables belonging from 
the sets $X_j,Y_j$ respectively.
Then the ANF of $f_n$ can be described as:
\begin{align*}
& f_n=(x_{i_1} \op \overline{d_1})g_1(X_1) \op (x_{i_1} \op d_1)h_1(Y_1)
\end{align*}
This is because the variable $x_{i_{k}^0}$ 
can influence the function if and only if $x_{i_1}=d_1$. This is due to the fact
that $x_{i_{k}^0}$ is queried only once in the decision tree. 
Similarly,
\begin{equation*}
g_1(X_1)=(x_{i_2} \op \overline{d_2})g_2(X_2) \op (x_{i_2} \op d_2)h_2(Y_2),
\end{equation*}
and so on. Finally we have 
\begin{equation*}
g_{k-2}(X_{k-2})=(x_{i_{k-1}} \op \overline{d_{k-1}})x_{i_{k}^0} \op (x_{i_{k-1}} \op d_{k-1})h_{k-2}(Y_{k-2}).
\end{equation*}

Therefore, the function $f_n$ can be written as 
\begin{align*}
&f_n= (x_{i_1} \op \overline{d_1})(x_{i_1} \op \overline{d_2})\ldots
(x_{i_{k-1}} \op \overline{d_{k-1}})x_{i_{k}^0} \op h_{k-1}(Y_k).
\end{align*}
This, in turn, implies that the resultant ANF
contains a $k$-term monomial $x_{i_1}x_{i_2}\ldots x_{i_{k}^0}$, which implies $\deg(f) \geq k$.

It has been shown in \cite[3.1]{exact} that
the minimum depth of any parity decision tree completely describing $f$
is at equal to or greater than $\deg(f)$, 
which implies there does not exist any query friendly function 
that can be completely described with a parity decision tree 
of depth $k-1$. This concludes our proof.
\end{proof}

With this proof of limitation we conclude the study of Query friendly functions in this paper. 
Next we study the deterministic and exact quantum query complexity of a large class of Boolean 
functions.

\section{Maiorana McFarland Bent Functions}\label{sec:4}
In this section we observe how parity decision trees can give us separation
in a large class of Cryptographically important Boolean functions. 
We consider the Maiorana-McFarland (M-M) type Boolean functions~\cite{MM1}, defined as follows.

\begin{definition}
Given any positive integer $n$ a Boolean function of M-M class
on $n=n_1+n_2$ variables $(v_1,v_2,\ldots v_n)$ is defined as 
$$f(x,y)=\phi(x).y \op h(x) ~, x \in \{0,1\}^{n_1},y \in \{0,1\}^{n_2}.$$
where
\begin{enumerate}

\item $x$ represents the variables $x_1=v_1,x_2=v_2,\ldots ,x_{n_1}=v_{n_1}$ 
and $y$ represents the variables ${y_1=v_{n_1}+1},y_2=v_{n_1}+2,\ldots ,
y_{n_2}=v_n$.

\item $h$ is any Boolean function and $\phi$ is any map $\phi: \{0,1\}^{n_1} \rightarrow \{0,1\}^{n_2}$.

\item $a.y$ is defined as the linear function $\displaystyle \bigoplus_{a_i=1} y_i$.
\end{enumerate}
\end{definition}

If we set $n_1=n_2=\frac{n}{2}$ and define $\phi$ to be a bijective mapping, 
all resultant M-M functions are bent functions~\cite{Bent1}, 
which are functions with highest possible nonlinearity for a given even $n$.
The non linearity of a function is defined as the minimum hamming distance of the 
truth table of a function of $n$ variable from all the linear function truth tables 
on $n$ variables cite~\cite{MM1}.
The M-M Bent functions and its different modifications have extensive applications in cryptographic
primitives and in coding theory ~\cite{Bentn2}.
 
We denote this class of M-M Bent functions by $\B_n$. 
There are $2^{2^{\frac{n}{2}}}(2^{\frac{n}{2}}!)$ functions in this class and the algebraic degree 
of the functions in this class vary between $2$ and $\frac{n}{2}$.
It is important to note that many functions of this class are not PNP equivalent, as two functions 
with different algebraic degree can not be PNP equivalent. 
At the same time it is also not necessary for two functions in $\B_n$ with same algebraic degree to be 
PNP equivalent. 
For an example, let us consider the functions corresponding to the identity permutation map, i.e. $\phi(i)=i$.
Then the function is of the form 
$ \big( \displaystyle \bigoplus_{i=1}^{\frac{n}{2}}x_iy_i \big) \op h(x)$.   
Now let there be two functions such that that the function defined on $x$ $(h(x))$
are not PNP equivalent. Then the two functions are not PNP equivalent as well. 

Having discussed the diversity of this class, we now analyze how the underlying definition of this class
can lead to the same bounds for all the functions belonging to this class, and we use parity decision tree
to achieve these bounds. 

\subsection{Deterministic and Exact Quantum Query Complexity}
We first calculate the deterministic query complexity of any function in the $\B_n$.
Given a point $\alpha \in \{0,1\}^{\frac{n}{2}}$ we define the point $\alpha^{(i)}, 1 \leq i \leq \frac{n}{2}$
as follows. 
\begin{align*}
& 1 \leq j \leq \frac{n}{2}, ~j \neq i ~:~  \alpha^{(i)}=\alpha_j \\ 
& j=i ~:~ \alpha^{(i)}_j = \overline{\alpha_j}
\end{align*}

We also define the points $A^1, A^0 \in {0,1}^{\frac{n}{2}}$ so that $A^1_i=1 ~\forall i$ and $A^0_i=1 ~\forall i$.
 
\begin{theorem}\label{th4:1a}
The deterministic query complexity of any function in $\B_n$ is $n$.
\end{theorem}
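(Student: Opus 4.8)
The plan is to establish the lower bound $D(f) \geq n$ by exhibiting an adversary strategy (or equivalently a sensitivity-type argument) showing that no deterministic algorithm can avoid querying all $n$ variables in the worst case, and then to note the trivial matching upper bound $D(f) \leq n$ which holds for every Boolean function on $n$ variables. The real content is the lower bound, and the key structural feature to exploit is that in an M-M Bent function $f(x,y) = \phi(x)\cdot y \oplus h(x)$ with $\phi$ a bijection on $\{0,1\}^{n/2}$, the $y$-variables enter only linearly, and their coefficients are governed by the value of $\phi(x)$.

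First I would argue that all $n_2 = n/2$ of the $y$-variables must be queried. Fix the $x$-variables to any value $\alpha$; then $f$ restricted to the $y$-block is the affine function $\phi(\alpha)\cdot y \oplus h(\alpha)$. Because $\phi$ is a bijection, for a suitable choice of $\alpha$ every coordinate of $\phi(\alpha)$ is $1$ (namely $\alpha = \phi^{-1}(A^1)$, using the notation $A^1$ defined just before the theorem), so the restricted function is the full parity $y_1 \oplus \cdots \oplus y_{n/2} \oplus h(\alpha)$. Every $y_i$ is then influencing on this restriction, and a parity on $m$ bits is maximally sensitive: any deterministic algorithm must read all $m$ of them, since leaving any $y_i$ unqueried allows the adversary to flip the output. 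This forces at least $n/2$ queries to the $y$-block along this branch.

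Next I would show the adversary can additionally force all $n/2$ of the $x$-variables to be queried, so that the two blocks' demands add to $n$. The plan is an adversary argument: the adversary answers $x$-queries so as to keep $\phi(x)$ undetermined on as many coordinates as possible, and crucially arranges that the algorithm cannot terminate until it knows $\phi(x)$ completely, because whether a given $y_i$ is even relevant to the output depends on the $i$-th bit of $\phi(x)$. Concretely, using the single-coordinate-flip points $\alpha^{(i)}$ defined before the theorem, one shows that flipping any single $x_i$ changes $\phi(x)$ (bijectivity guarantees $\phi(\alpha)\neq\phi(\alpha^{(i)})$), which changes the set of relevant linear coefficients on the $y$-block and hence, for an appropriate completion of the $y$-values, changes the output. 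Thus each $x_i$ is sensitive at a well-chosen input, and the adversary maintains consistency with two inputs differing in $x_i$ until $x_i$ is queried. Combining, every one of the $n$ variables is sensitive along some maintained branch, forcing $D(f) \geq n$, and with $D(f)\leq n$ trivially we conclude $D(f)=n$.

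\textbf{The main obstacle} I anticipate is making the adversary bookkeeping fully rigorous when the two blocks interact: I must ensure that the $n/2$ forced $x$-queries and the $n/2$ forced $y$-queries cannot be \emph{shared} or traded off against each other by a clever algorithm that infers $y$-information from $x$-answers or vice versa. The cleanest way to handle this is probably to exhibit a single input at which the \emph{total} sensitivity (or block sensitivity) is $n$ — for instance an input where $\phi(x) = A^1$ so that all $y_i$ are live, and simultaneously every $x_i$-flip perturbs both $\phi(x)$ and $h(x)$ in a way that is not cancelled — and then invoke the standard bound $D(f) \geq s(f)$. Verifying that the $n$ relevant single-bit flips are genuinely output-changing at one common point, rather than at $n$ different points, is the delicate step; this requires choosing the base point and possibly a favorable $y$-completion so that the sensitivities to $x$-flips and to $y$-flips coexist, and it is here that bijectivity of $\phi$ (ensuring $\phi(\alpha^{(i)})$ differs from $\phi(\alpha)$ in at least one live coordinate) does the essential work.
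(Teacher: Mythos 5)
Your first step (forcing all $n/2$ queries in the $y$-block by setting $x=\phi^{-1}(A^1)$, so that the restriction is a full parity) is exactly the paper's opening move and is fine. The genuine gap is in your proposed resolution of what you yourself flag as the delicate step: you want a \emph{single} input $(\hat{x},b)$ at which all $n$ coordinate flips change the output, and then invoke $D(f)\geq s(f)$. Such an input need not exist, and bijectivity of $\phi$ does not rescue it. Writing $\hat{x}=\phi^{-1}(A^1)$, $v_i=A^1\oplus\phi(\hat{x}^{(i)})$ and $c_i=h(\hat{x})\oplus h(\hat{x}^{(i)})$, the flip of $x_i$ changes the output at $(\hat{x},b)$ if and only if $v_i\cdot b=1\oplus c_i$. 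Bijectivity guarantees only that each $v_i\neq 0$; it does not make this system of $n/2$ affine constraints on $b$ consistent. Concretely, take $n=6$, $h\equiv 0$, and a bijection $\phi$ with $\phi(0,0,0)=(1,1,1)$, $\phi(1,0,0)=(0,1,1)$, $\phi(0,1,0)=(1,0,1)$, $\phi(0,0,1)=(0,0,1)$, extended arbitrarily to a bijection on the remaining points. Then $v_1=(1,0,0)$, $v_2=(0,1,0)$, $v_3=(1,1,0)=v_1\oplus v_2$, and the required conditions $b_1=1$, $b_2=1$, $b_1\oplus b_2=1$ are inconsistent. Since any point of sensitivity $n$ must have $\phi(x)=A^1$ (otherwise fewer than $n/2$ of the $y$-variables are live), this M-M Bent function satisfies $s(f)\leq n-1$ even though $D(f)=n$. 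A block-sensitivity variant fails for the same reason: $n$ disjoint sensitive blocks among $n$ variables would all have to be singletons, so $bs(f)=n$ would again force $s(f)=n$.

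So the lower bound cannot be certified at one point, and the paper's proof is structured precisely to avoid this. It argues about the query path: on any input $(\hat{x},y)$ the path must contain all $n/2$ of the $y$-variables (your step 1), so if the tree never made $n$ queries, some $x_i$ would be absent from that path; the tree then reaches the same leaf on $(\hat{x},y)$ and $(\hat{x}^{(i)},y)$, and its output must equal both $f(\hat{x},y)$ and $f(\hat{x}^{(i)},y)$. The contradiction comes not from sensitivity at a common point but from comparing the two \emph{restrictions} as functions of $y$: $f(\hat{x},\cdot)$ is the full parity, sensitive to every $y_k$, while $f(\hat{x}^{(i)},\cdot)$ ignores every $y_k$ with $\phi(\hat{x}^{(i)})_k=0$, and such a $k$ exists because $\phi(\hat{x}^{(i)})\neq A^1$. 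Your adversary sketch in the second paragraph could in principle be completed along these lines (maintain $x=\hat{x}$, and play the unqueried $x_i$ against a coordinate $y_k$ that is irrelevant at $\hat{x}^{(i)}$), but as written your proof rests on a claim --- existence of a common point of sensitivity $n$, with bijectivity of $\phi$ doing ``the essential work'' --- that is false for some functions in $\B_n$.
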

\begin{proof}
Let us assume that there exists a deterministic decision tree $D$ that 
queries $n-1$ variables to evaluate a function $f \in \B_n$ in the worst case. 
Let $D(x,y)$ denote the output obtained using the Deterministic tree with $(x,y)$ as
the input. 
This means the longest root to leaf vertex contains $n-1$ internal nodes (queries).

We consider the point $\hat{x} \in \{0,1\}^{\frac{n}{2}}$ such that $\phi(\hat{x})=A^1$.
Then $f(\hat{x},y)= y_1 \op y_2 \op \ldots y_{\frac{n}{2}} \op h(\hat{x})$ for all $y \in \{0,1\}^{\frac{n}{2}}$.
Therefore at any point $(\hat{x},y)$ any deterministic decision tree (algorithm) 
has to query all $\frac{n}{2}$ bits of $y$ to evaluate the function correctly.

Now if a decision tree doesn't query a variable  $x_i \in X$ at a point $(\hat{x},y)$
Then the decision tree traversal for the points $(\hat{x},y)$ and $(\hat{x}^{(i)},y)$
will be identical, so that $D(\hat{x},y)=D(\hat{x}^{(i)},y)~ \forall y \in \{0,1\}^{\frac{n}{2}}$.   

But weight of $(\phi{\hat{x}^{(i)}})$ is at most $n-1$.
This implies for any point $(\hat{x}^{(i)},y)$ there is at least an index $1 \leq k \leq n$
such that $f(\hat{x}^{(i)},y)=f(\hat{x}^{(i)},y^{(k)})$.

However we know from the definition of $\hat{x}$ that $f(\hat{x},y) \neq f(\hat{x},y^{(i)})$.
This contradicts the claim that a deterministic decision tree can evaluate a function $f \in B_n$
with $n-1$ queries in the worst case, and thus we have $D(f)=n$.
\end{proof}

Now we observe how parity decision tree can be used to form a quantum algorithm which can 
always evaluate a function in $\B_n$ with less than $n$ queries. 

We first provide a very simply derivable quantum advantage using parity decision trees.

\begin{lemma}\label{lem4:1}
Given any function $f \in \B_n$ we have $Q_E(f)\leq \lceil \frac{3n}{4} \rceil$.
\end{lemma}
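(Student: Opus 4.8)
The plan is to exhibit an explicit parity decision tree (equivalently, a quantum query algorithm accessing parities $x_i \op y_j$) of depth $\lceil \frac{3n}{4} \rceil$ that evaluates any $f \in \B_n$. Recall $f(x,y) = \phi(x).y \op h(x)$ with $x,y \in \{0,1\}^{n/2}$ and $\phi$ a bijection. The deterministic complexity is $n$ by Theorem~\ref{th4:1a}, so the target $\lceil \frac{3n}{4} \rceil$ represents a saving of roughly $n/4$ queries, which strongly suggests the following structure: spend about $n/2$ queries learning the "control" variables $x$ completely, and then use the parity trick to read off the relevant linear combination of the $y$ variables in about $n/4$ queries instead of $n/2$.

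**First I would** query all $\frac{n}{2}$ variables $x_1, \dots, x_{n/2}$ individually using $\frac{n}{2}$ ordinary (single-variable) queries at the top of the parity decision tree; these form a complete binary tree of depth $\frac{n}{2}$ whose leaves correspond to the $2^{n/2}$ possible values of $x$. Once $x$ is fully determined, both $\phi(x)$ and $h(x)$ are fixed known constants, so the residual function to evaluate is the affine form $\phi(x).y \op h(x) = \big(\bigoplus_{i : \phi(x)_i = 1} y_i\big) \op h(x)$. Computing an XOR of $m := wt(\phi(x))$ of the $y$-variables is the key subproblem: since each parity query returns $y_{j_1} \op y_{j_2}$, I can pair up the relevant $y_i$'s and compute the parity of $m$ bits using $\lceil m/2 \rceil$ parity queries (pair them two at a time into $\lfloor m/2 \rfloor$ parities, then recurse or simply note that summing the $\lceil m/2 \rceil$ measured parities classically recovers the total XOR; the final constant $\op h(x)$ is applied classically at the leaf). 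Thus each branch needs $\frac{n}{2} + \lceil \frac{wt(\phi(x))}{2} \rceil$ queries, and since $wt(\phi(x)) \leq \frac{n}{2}$ we get the bound $\frac{n}{2} + \lceil \frac{n}{4} \rceil = \lceil \frac{3n}{4} \rceil$ on the worst-case depth.

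**The remaining step** is to argue that this procedure is legitimately realizable as a parity decision tree of the stated depth, i.e. that computing an XOR of $m$ bits using $\lceil m/2 \rceil$ parity queries is sound. This is the crux: a single query $x_{i_1} \op x_{i_2}$ gives one parity, so to obtain $y_{a_1} \op \dots \op y_{a_m}$ I pair the indices as $(a_1,a_2),(a_3,a_4),\dots$, query each pair's parity, and XOR the $\lceil m/2 \rceil$ returned bits together inside the decision tree's branching (a leftover unpaired bit, when $m$ is odd, costs one extra single-variable query, which is already accounted for by the ceiling). Because the tree branches on the value of $x$ only through its first $\frac{n}{2}$ levels and the $y$-reading pattern depends only on the now-fixed set $\{i : \phi(x)_i=1\}$, the construction is well-defined on every root-to-leaf path.

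**The main obstacle** I anticipate is not the XOR-pairing arithmetic but bookkeeping the depth uniformly across all $2^{n/2}$ branches: different values of $x$ yield different weights $wt(\phi(x))$, so different leaves have different $y$-reading depths. One must verify that the maximum over all branches is exactly $\lceil \frac{3n}{4}\rceil$, which follows because $\phi$ maps into $\{0,1\}^{n/2}$ so $wt(\phi(x)) \le \frac{n}{2}$ always, and this worst case is attained (e.g. when some $x$ has $\phi(x) = A^1$, the all-ones vector). Since $Q_E(f)$ is bounded by the depth of any parity decision tree computing $f$ (as established in Section~\ref{sec:3}), this yields $Q_E(f) \le \lceil \frac{3n}{4}\rceil$ for every $f \in \B_n$.
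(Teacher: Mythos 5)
Your proposal is correct and follows essentially the same route as the paper's proof: query all $\frac{n}{2}$ variables of $x$ individually, then use parity queries $y_{i_1} \op y_{i_2}$ to evaluate the now-fixed linear form $\bigoplus_{\phi(x)_i=1} y_i \op h(x)$ in $\lceil \frac{wt(\phi(x))}{2} \rceil \leq \lceil \frac{n}{4} \rceil$ further queries. Your additional bookkeeping about the per-branch depths and the odd-weight leftover bit is a fuller account of details the paper leaves implicit, but the algorithm is identical.
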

\begin{proof}
We prove this by describing an algorithm that can evaluate any Boolean function in of the type 
$\B_n$ using $\lceil \frac{3n}{4} \rceil$ queries.

The queries made by this quantum algorithm are of the form $x_i$ or $x_{i_1} \op x_{i_2}$ 
and can therefore be expressed as a parity decision tree.

Given any input $(x,y)$ the algorithm first queries the $\frac{n}{2}$ variables $x_1,x_2,\ldots, x_n$.
Then depending on the definition of the function it does one of the following two tasks. 
\begin{enumerate}
\item If $h(x)=0$ then it evaluates $\displaystyle \bigoplus_{\phi(x)_i=1} y_i$

\item If $h(x)=1$ then it evaluates $ \Big( \displaystyle \bigoplus_{\phi(x)_i=1} y_i \Big) \oplus 1$
\end{enumerate}
In either case this requires $\lceil \frac{wt(\phi(x))}{2} \rceil$ queries and therefore at max requires
$\lceil \frac{n}{4} \rceil$ queries. This proves the upper bound.
\end{proof}

\begin{figure}[!htb]
\begin{center}
\includegraphics[scale=0.29]{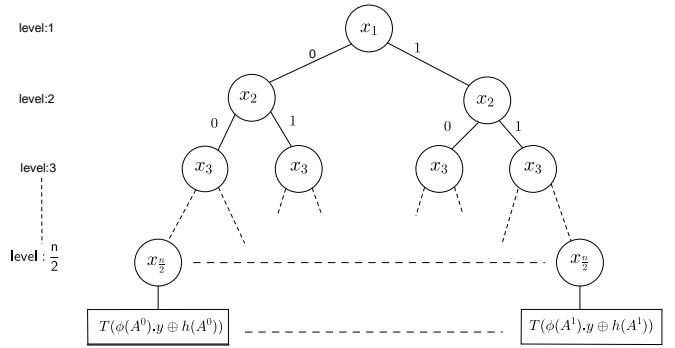}
\end{center}
\caption{(Parity) Decision tree structure for the functions in $\B_n$ }
\label{fig:9}
\end{figure} 

It is also also evident from the constructions that the parity decision tree corresponding to the
quantum algorithm and the classical decision tree corresponding to the classical algorithm are 
analogous in nature.  Figure \ref{fig:9} shows the structure of the tree.
Here $T(\phi(x).y+h(x))$ denotes the parity decision tree (decision tree) for the quantum (classical) algorithm.
In case of the parity decision tree, $T(\phi(a).y+h(a))$ has a depth of
$\lceil \frac{wt(\phi(a))}{2} \rceil$ where as in the classical case it is $wt(\phi(a))$.
We do not draw the structures for these cases to avoid repetition.

However one should note this does not prove that there cannot be a parity based decision tree 
that may evaluate such a function using lesser number of queries.
Having established the lower bound for the parity decision tree method we next observe some more 
general lower bounds using simple reduction. 

\begin{lemma}
For any function $f$ in $\B_n$, we have $Q_E(f) \geq \max(Q_E(h),\frac{n}{4})$.
\end{lemma}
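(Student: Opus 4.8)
The statement to prove is that for any $f \in \B_n$, we have $Q_E(f) \geq \max(Q_E(h), \frac{n}{4})$.

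The plan is to establish the two lower bounds separately, since the maximum of two quantities is bounded below by each of them individually. To show $Q_E(f) \geq \frac{n}{4}$, I would use a restriction (subfunction) argument combined with Theorem~\ref{th4:1a} or, more directly, the known value $Q_E(OR_k)=Q_E(AND_k)=k$. Recall that there exists some $\hat{x}$ with $\phi(\hat{x})=A^1$, so that $f(\hat{x},y)=y_1 \op y_2 \op \cdots \op y_{n/2} \op h(\hat{x})$. However, a linear function is easy quantumly, so the right move is instead to exploit the full structure over all of $x$. The cleaner approach for the $\frac{n}{4}$ bound is to reduce $f$ to a hard function on roughly $n/2$ of the $y$-variables: by fixing $x=\hat{x}$ we cannot get hardness from the linear part directly, so I would instead argue that evaluating $f$ requires recovering enough parity information, and lower-bound via the polynomial method. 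Concretely, I expect the intended argument to fix $x$ to a value where $\phi(x)$ has weight $\frac{n}{2}$ and observe the residual function on $y$ is a full parity, whose exact quantum query complexity is $\lceil \frac{n/2}{2}\rceil = \frac{n}{4}$ (one query per two parity bits, as in Lemma~\ref{lem4:1}), giving the lower bound $\frac{n}{4}$.

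To show $Q_E(f) \geq Q_E(h)$, the natural idea is a restriction argument on the $y$-variables. If I fix all of $y$ to $0$, then $\phi(x).y$ vanishes identically and $f(x,0)=h(x)$. Thus $h$ is a subfunction (restriction) of $f$ obtained by fixing the $n_2$ variables of $y$ to constants. Since restricting variables to constants cannot increase exact quantum query complexity --- any exact quantum algorithm for $f$ yields one for the restriction by hard-wiring the fixed inputs into the oracle --- we immediately obtain $Q_E(f) \geq Q_E(f|_{y=0}) = Q_E(h)$. I would state this monotonicity-under-restriction property explicitly, as it is the crux of this half.

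The main obstacle I anticipate is the $\frac{n}{4}$ bound rather than the $Q_E(h)$ bound. The restriction $y=0$ giving $Q_E(f)\geq Q_E(h)$ is routine, but obtaining $\frac{n}{4}$ cleanly requires choosing the right restriction of $f$ to isolate a provably hard residual function and then invoking an established lower bound (either the parity structure together with the fact that a $d$-query exact algorithm has polynomial degree at most $2d$, or a direct reduction to $\bigoplus$ of $\frac{n}{2}$ bits). I would therefore carry out the $Q_E(h)$ bound first as the easy step, then devote the bulk of the argument to selecting $x$ with $wt(\phi(x))=\frac{n}{2}$, reducing $f$ to a function whose quantum query complexity is exactly $\frac{n}{4}$, and concluding by taking the maximum of the two bounds.
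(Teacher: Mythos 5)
Your proposal is correct and takes essentially the same approach as the paper: fix $x=\hat{x}$ with $\phi(\hat{x})=A^1$ so that $f$ restricts to the parity of all $n/2$ variables of $y$ (giving $Q_E(f)\geq \lceil n/4\rceil$ via the polynomial-degree lower bound you cite), fix $y$ to the all-zero point so that $f$ restricts to $h$ (giving $Q_E(f)\geq Q_E(h)$ by monotonicity under restriction), and take the maximum. Your mid-proof hesitation that ``a linear function is easy quantumly'' is a red herring --- parity of $m$ bits still requires exactly $\lceil m/2\rceil$ exact quantum queries, which is precisely what makes this restriction yield the $\frac{n}{4}$ bound --- and the concrete argument you settle on is exactly the paper's.
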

\begin{proof}

For any function $f(x,y) \in \B_n$ if we fix the value of $x$ such that $\phi(x)=A^1$ then 
the function is reduced to $f'(y)=y_1\op y_2 \op \ldots y_{\frac{n}{2}}$ and therefore $Q_E(f) \geq \frac{n}{4}$.

Similarly, if we fix $y=A^0$ then the $f$ is reduced to $f''(x)=h(x)$ and thus $Q_E(f) \geq Q_E(h)$.

Combining these two values we have $Q_E(f) \geq \max (Q_E(h),\frac{n}{4})$.
\end{proof}

Finally we show the generic real polynomial that represents any function in the class $\B_n$.
It is known that any Boolean function $f$ can be represented by a unique multivariate polynomial
$p: \R^n \rightarrow \R$. The exact quantum and deterministic query complexity can be related to
the degree of this polynomial $(\deg^{\R}(p))$ as $Q_E(f) \geq \frac{\deg^{\R}(p)}{2}$ and $D(f) \geq \deg^{\R}(p)$.
 
This polynomial can in fact be derived from the description of the parity decision tree.
The degree of the polynomial corresponding to any function in $\B_n$ is found to be $n$ 
which gives a tighter lower bound of $\frac{n}{2}$ on the exact quantum query complexity.

\begin{lemma} \label{lem4:2}
The degree of the real polynomial corresponding to any function in $\B_n$ is $n$.
\end{lemma}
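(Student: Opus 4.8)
The plan is to pass to the $\pm 1$ encoding of $f$ and isolate a single top-degree coefficient. Write $p$ for the unique real multilinear polynomial representing $f$, and set $F = 1 - 2p$, so that $F$ agrees on $\{0,1\}^n$ with $(-1)^{f(x,y)}$ and $\deg^{\R}(F)=\deg^{\R}(p)$ (the two differ only by an affine rescaling). Under the invertible per-variable substitution $z_i = (1 - u_i)/2$, which sends each $\{0,1\}$ variable to a $\pm 1$ variable $u_i$ and preserves total degree, the degree of $p$ equals the degree of $F$ in the $u_i$, i.e.\ the largest $\abs{S}$ for which the coefficient of $\prod_{i\in S} u_i$ is nonzero. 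Since there are only $n$ variables, $\deg^{\R}(p)\le n$ is automatic, so it suffices to show that the coefficient attached to the \emph{full} variable set $\{x_1,\dots,x_{n/2},y_1,\dots,y_{n/2}\}$ does not vanish. First I would write this coefficient as the average
\[
c \;=\; \frac{1}{2^{n}}\sum_{x,y}(-1)^{f(x,y)}\,(-1)^{wt(x)+wt(y)}.
\]

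Next I would use the defining form of the class to factor the encoded function. Since $f(x,y)=\phi(x)\cdot y \op h(x)$, we have
\[
(-1)^{f(x,y)} = (-1)^{h(x)}\,(-1)^{\phi(x)\cdot y} = (-1)^{h(x)}\prod_{j=1}^{n/2}(-1)^{\phi(x)_j\,y_j},
\]
so that, separating the two blocks of variables,
\[
c \;=\; \frac{1}{2^{n}}\sum_{x}(-1)^{h(x)+wt(x)}\,\sum_{y}(-1)^{\phi(x)\cdot y + wt(y)}.
\]
The inner sum is the crucial collapse. Its exponent is linear in $y$, so it factorizes over the $n/2$ coordinates, and the $j$-th factor is $\sum_{y_j\in\{0,1\}}(-1)^{(\phi(x)_j \op 1)\,y_j}$, which equals $2$ when $\phi(x)_j = 1$ and $0$ when $\phi(x)_j = 0$. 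Hence the inner sum equals $2^{n/2}$ if $\phi(x)=A^1$ and vanishes otherwise; the apparently complicated dependence of the exponent on $x$ through $\phi$ disappears entirely.

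Finally I would invoke the bijectivity of $\phi$, which holds for every function in $\B_n$. There is exactly one preimage $\hat{x}$ with $\phi(\hat{x})=A^1$, so the outer sum collapses to a single surviving term and
\[
c \;=\; \frac{1}{2^{n}}\,(-1)^{h(\hat{x})+wt(\hat{x})}\,2^{n/2} \;=\; \pm\,2^{-n/2}\;\neq\;0.
\]
Thus the degree-$n$ monomial in $p$ has a nonzero coefficient, giving $\deg^{\R}(p)\ge n$, and combined with $\deg^{\R}(p)\le n$ this yields $\deg^{\R}(p)=n$, from which the claimed lower bound $Q_E(f)\ge \tfrac{n}{2}$ follows via $Q_E(f)\ge \tfrac{\deg^{\R}(p)}{2}$.

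The main obstacle to anticipate is precisely the hidden $x$-dependence inside $\phi(x)$ in the exponent. A direct attack, building the multilinear polynomial term by term (say, as a sum over $x$ of indicator polynomials of degree $n/2$ multiplied by the degree-$n/2$ parity polynomial in $y$), would force one to prove that the degree-$n$ monomials do not cancel across all $2^{n/2}$ values of $x$, an awkward cancellation analysis. Routing the argument through the single top coefficient sidesteps this: summing over $y$ first annihilates every $x$ except the unique $\phi$-preimage of $A^1$, so no cancellation argument is needed and the bijectivity of $\phi$ does all the work.
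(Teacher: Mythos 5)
Your proof is correct, but it takes a genuinely different route from the paper's. The paper constructs the representing polynomial explicitly as $p=\sum_{a\in\{0,1\}^{n/2}}\mathcal{P}_a\mathcal{L}_a$, where $\mathcal{P}_a=\prod_{a_i=0}(1-x_i)\prod_{a_i=1}x_i$ is the degree-$\frac{n}{2}$ indicator of $x=a$ and $\mathcal{L}_a$ is the degree-$wt(\phi(a))$ polynomial computing $\phi(a)\cdot y\op h(a)$; it then observes that a degree-$n$ term can arise only from summands with $wt(\phi(a))=\frac{n}{2}$, and by bijectivity of $\phi$ there is exactly one such $a$, so the top-degree term survives. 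You never write the polynomial down at all: you compute the single top Fourier coefficient of the $\pm1$ encoding as a character sum, and let the sum over $y$ annihilate every $x$ with $\phi(x)\neq A^1$. Both arguments pivot on the same two facts --- bijectivity of $\phi$ and the distinguished point $\hat{x}$ with $\phi(\hat{x})=A^1$ --- but yours buys an exact value of the top coefficient ($\pm2^{-n/2}$) without having to track leading coefficients of products of polynomials, while the paper's buys the explicit polynomial itself, which it reuses (the construction doubles as a second proof that $D(f)=n$, something your coefficient computation does not directly give). One small quibble: your closing remark that the term-by-term route ``would force an awkward cancellation analysis'' overstates the difficulty. In the paper's decomposition the degree-$n$ monomial appears in exactly one summand, for precisely the same bijectivity reason that collapses your outer sum, so no cancellation analysis is needed on that route either; the two proofs are equally immune to cancellation, just at different stages.
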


\begin{proof}
The polynomial corresponding to the function $f\in \B_n$ such that $f(x,y)=\phi(x).y \op h(x)$
can be formulated as follows. 

We observe that only one of the linear function defined on the variables $\{y_1,y_2,\ldots, y_{\frac{n}{2}} \}$
is evaluated for any input $(x,y)$ depending on the value of $\phi(x)$.
Therefore we first form the following product terms on the variables $\{x_1,x_2,\ldots, x_{\frac{n}{2}} \}$.
We define the $\mathcal{P}_a, a \in\{0,1\}^{\frac{n}{2}}$ as
$$ \mathcal{P}_a= \displaystyle \prod_{a_i=0} (1-x_i)\displaystyle \prod_{a_i=1} x_i.$$

$\mathcal{P}_a$ evaluates to $1$ iff $x=a$, $0$ otherwise. 
Now we append the corresponding linear functions defined by $\phi(a)$ to each of these 
product terms. We also account for the function $h(x)$ which evaluates to $h(a)$ for any input $(a,y)$.

Therefore the linear function to be evaluated is  $ \Big( \displaystyle \bigoplus_{\phi(a)_i=1} y_i \Big) \op h(a)$,
which is represented as 
$$ \mathcal{L}_a=  \frac{1-\Big((-1)^{h(a)}\displaystyle\prod_{\phi(a)_i=1} (1-2y_i) \Big)}{2} $$

Therefore we have the polynomial $p(x)$ corresponding to the function $f$ as 
$$p(x)= \displaystyle \sum_{a \in \{0,1\}^n} \mathcal{P}_a \mathcal{L}_a.$$
Therefore by definition we have $\deg^{\R}(\mathcal{L}_a)=wt(a)$ and $\deg^{\R}(\mathcal{P}_a)=\frac{n}{2}, \forall a$ 
and since there is only one value of $a$ with $wt(a)=\frac{n}{2}$ this implies $\deg^{R}(p)=n$.

This polynomial is defined as $p:~\mathcal{R}^n \rightarrow R$ but its range becomes $\{0,1\}$ 
when the domain is restricted to $\{0,1\}^n$.
\end{proof}

This proof is also another way of showing that the Deterministic Query complexity of any function in $\B_n$ is $n$.

Combining Lemma~\ref{lem4:1} and Lemma~\ref{lem4:2} we obtain the following result.

\begin{theorem}\label{th4:1}
For any Maiorana McFarland type Bent function $f$ we have $\frac{n}{2} \leq Q_E(f) \leq \frac{3n}{4}$.
\end{theorem}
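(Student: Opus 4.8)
The plan is to assemble Theorem~\ref{th4:1} directly from the two preceding lemmas, one supplying the upper bound and the other the lower bound, so that essentially no new argument is needed beyond quoting the general polynomial method. Since both bounds have already been established uniformly over \emph{all} of $\B_n$, the theorem is simply their conjunction: the real content lives inside Lemma~\ref{lem4:1} and Lemma~\ref{lem4:2}, and here I only have to glue them together.

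For the upper bound I would invoke Lemma~\ref{lem4:1} verbatim. That lemma exhibits an explicit parity decision tree algorithm for any $f \in \B_n$: first query all $\frac{n}{2}$ bits of $x$ to determine $\phi(x)$ and $h(x)$, and then evaluate the single linear form $\bigoplus_{\phi(x)_i = 1} y_i$ (complemented if $h(x)=1$) using $\lceil \frac{wt(\phi(x))}{2}\rceil \le \lceil \frac{n}{4}\rceil$ parity queries of the form $y_{i_1}\op y_{i_2}$. This yields $Q_E(f) \le \lceil \frac{3n}{4}\rceil$; I would remark only that because $n$ is even for a bent function the ceiling is a harmless rounding, giving the stated $\frac{3n}{4}$.

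For the lower bound I would use the standard polynomial method, namely the inequality $Q_E(f) \ge \frac{\deg^{\R}(p)}{2}$ recalled just before Lemma~\ref{lem4:2}, where $p$ is the unique real multilinear polynomial representing $f$. By Lemma~\ref{lem4:2} this representing polynomial has $\deg^{\R}(p) = n$; the full degree arises from the single summand $\mathcal{P}_a \mathcal{L}_a$ in $p = \sum_a \mathcal{P}_a \mathcal{L}_a$ for which $\phi(a)$ is the all-ones vector $A^1$ — a unique such $a$ since $\phi$ is a bijection — where $\mathcal{P}_a$ and $\mathcal{L}_a$ each have degree $\frac{n}{2}$, producing a monomial in all $n$ variables that no other summand can cancel. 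Hence $Q_E(f) \ge \frac{n}{2}$, and combining with the upper bound gives $\frac{n}{2} \le Q_E(f) \le \frac{3n}{4}$.

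There is no genuine obstacle at this stage, since all the work has been discharged by the two lemmas and the polynomial-method inequality is classical. The one point that deserves care — and which is already handled inside Lemma~\ref{lem4:2} — is the \emph{non-cancellation} of the top-degree term: one must check that exactly one $a$ makes $\mathcal{L}_a$ attain degree $\frac{n}{2}$, so that the degree-$n$ monomial genuinely survives in $p$ rather than being merely an upper estimate on the degree. Given that, the assembly of the theorem is immediate.
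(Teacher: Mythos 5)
Your proposal is correct and takes exactly the paper's route: the paper proves Theorem~\ref{th4:1} in one line by combining Lemma~\ref{lem4:1} (the parity-decision-tree algorithm giving $Q_E(f)\le\lceil\tfrac{3n}{4}\rceil$) with Lemma~\ref{lem4:2} (real degree $n$, hence $Q_E(f)\ge\tfrac{n}{2}$ via $Q_E(f)\ge\tfrac{\deg^{\R}(p)}{2}$), and your non-cancellation check for the unique $a$ with $\phi(a)=A^1$ is precisely the content of the paper's Lemma~\ref{lem4:2}. The only blemish is your claim that evenness of $n$ makes the ceiling harmless --- for $n\equiv 2 \pmod 4$ the quantity $\tfrac{3n}{4}$ is not an integer and $\lceil\tfrac{3n}{4}\rceil>\tfrac{3n}{4}$ --- but this discrepancy between the lemma's bound and the theorem's stated bound exists in the paper itself.
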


The statement of Theorem~\ref{th4:1} gives rise to the following corollary.

\begin{corollary}
For all values of $n$ there are two or more M-M Bent functions that have different algebraic degree 
and same exact quantum query complexity.
\end{corollary}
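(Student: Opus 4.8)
The plan is a counting (pigeonhole) argument that plays the narrow spread of possible values of $Q_E$ against the wide spread of algebraic degrees realized inside $\mathcal{B}_n$. The two ingredients are Theorem~\ref{th4:1}, which confines $Q_E(f)$ to the integer interval $\left[\frac{n}{2},\left\lceil\frac{3n}{4}\right\rceil\right]$ for every $f\in\mathcal{B}_n$, and the fact recorded when $\mathcal{B}_n$ was introduced that the algebraic degree of its members ranges over all of $\{2,3,\ldots,\frac{n}{2}\}$.

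First I would bound the number of distinct exact quantum query complexities that can occur in $\mathcal{B}_n$. Since $Q_E(f)$ is an integer confined to the interval above, it can take at most
\begin{equation*}
\left\lceil\tfrac{3n}{4}\right\rceil-\tfrac{n}{2}+1
\end{equation*}
distinct values, which is $\frac{n}{4}+O(1)$. I would then exhibit, for each degree $d$ with $2\le d\le\frac{n}{2}$, at least one function $f_d\in\mathcal{B}_n$ with $\deg(f_d)=d$: taking $h\equiv 0$ and choosing the bijection $\phi$ so that exactly one coordinate $\phi(x)_i$ has algebraic degree $d-1$ makes the monomial $\phi(x)_iy_i$ contribute degree $d$, and since a coordinate of a permutation of $\{0,1\}^{n/2}$ can have any degree from $1$ up to $\frac{n}{2}-1$, every target $d$ is attainable. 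This yields $\frac{n}{2}-1$ functions of pairwise distinct algebraic degrees.

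Finally I would apply the pigeonhole principle to the map $f_d\mapsto Q_E(f_d)$. For $n$ past a small threshold the number of distinct degrees, $\frac{n}{2}-1$, strictly exceeds the number of available $Q_E$ values; concretely $\frac{n}{2}-1>\left\lceil\frac{3n}{4}\right\rceil-\frac{n}{2}+1$ holds once $n>8$. Hence two of the $f_d$ with different algebraic degrees must share the same exact quantum query complexity, which is precisely the assertion of the corollary.

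The main obstacle I anticipate is not the pigeonhole step itself but pinning down the degree spectrum and the small-$n$ boundary. I must be sure that $\mathcal{B}_n$ genuinely realizes more distinct degrees than there are possible $Q_E$ values (it suffices to hit more than $\frac{n}{4}+1$ of the degrees in $\{2,\ldots,\frac{n}{2}\}$, though the cleanest route is to realize all of them as above), and I must acknowledge the few small even $n$ where $\frac{n}{2}-1$ does not yet exceed the $Q_E$ count. Thus the phrase \emph{for all values of $n$} should be understood as every even $n$ large enough that $\mathcal{B}_n$ contains more distinct algebraic degrees than possible values of $Q_E$, i.e. all sufficiently large even $n$.
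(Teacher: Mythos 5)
Your proposal is correct in substance and takes essentially the same route as the paper: a pigeonhole argument pitting the degree spread $\{2,\ldots,\frac{n}{2}\}$ of $\B_n$ against the narrower band $\left[\frac{n}{2},\lceil\frac{3n}{4}\rceil\right]$ that Theorem~\ref{th4:1} allows for $Q_E$. Where you go beyond the paper is in supplying the two details it leaves implicit --- an explicit construction realizing every degree $d\in\{2,\ldots,\frac{n}{2}\}$ inside $\B_n$, and the small-$n$ threshold --- and the latter is not pedantry: for $n=4$ every function in $\B_4$ has degree $2$ (the paper's own examples $f_1,f_2$ both do), so the corollary as stated ``for all values of $n$'' actually fails there, and your restriction to sufficiently large even $n$ is the correct reading. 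One numerical correction: the number of admissible $Q_E$ values is $\lceil\frac{n}{4}\rceil+1$, so the strict inequality $\frac{n}{2}-1>\lceil\frac{3n}{4}\rceil-\frac{n}{2}+1$ first holds at $n\ge 12$ rather than $n>8$ (at $n=10$ both sides equal $4$); this does not damage your argument, since you only claim that some finite threshold suffices.
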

\begin{proof}
The algebraic degree of the functions in $\B_n$ vary between $2$ and $\frac{n}{2}$ where as the exact quantum 
query complexity varies between $\frac{n}{2}$ and $\lceil \frac{3n}{4} \rceil$. 

Therefore applying pigeonhole principle it is easy to see that there are at least two Boolean functions
with different algebraic degree and same exact quantum query complexity.
\end{proof}
It is important to note that two functions in $\B_n$ may have the same algebraic degree yet different 
exact quantum query complexity.
Characterizing these equivalence classes for $\B_n$ appears to be a very interesting problem.
It is also interesting to observe that the real polynomial corresponding to any function in $\B_n$ can be
obtained from the description of the corresponding parity decision tree.

We further observe the exact quantum query complexity of functions in $\B_4$ which gives us more insight 
into this problem.
The different Boolean functions in the class $\B_4$ upto isomorphism are the following. 
\begin{align*}
& f_1(x)=x_1x_2 \op x_3x_4 \\
& f_2(x)=x_1x_2 \op x_3x_4 \op x_2x_3
\end{align*}

The exact quantum query complexity of all $4$ variable M-M Bent functions can be 
observed from \cite[Table A.1]{exact} which is obtained using the convex optimization package CVX~\cite{cvx} for Matlab. 
In this regard we observe that $Q_E(f_1)=Q_E(f_2)=3$ which touches the upper bound of 
$\lceil \frac{3n}{4} \rceil$ in this case. 

\section{conclusion}\label{sec:5}
In this paper we have first discussed the separation between the deterministic and exact quantum query model in terms 
of the number of influencing variables in Section~\ref{sec:2} and \ref{sec:3}. 
We have used the parity decision tree model to find separation between deterministic
and exact quantum query in a special class of Boolean functions (Query Friendly functions) using the structured 
nature of the parity decision tree model.
The characterization achieved by us in terms of 
query friendly functions is as follows. 

\begin{enumerate}

\item For all varies of $n$ there exists a non-separable query friendly function.

\item If $n=2^k-1$ or $n=2^k-2,~k>2$ , then all query friendly functions are non-separable.

\item If $n \neq 2^k-1$, then we construct a set of non-separable functions, namely $f_{(n,1)}$. 

\item If $2^{k-1}-1< n \leq 2^{k-1}+2^{k-2}-1$, then we construct a set of separable functions, 
namely $f_{(n,2)}$.

\item If $2^{k-1}+2^{k-2}-1 < n \leq 2^k-1$, we show that no separable function on $n$ variables
can be completely described using parity decision trees.
\end{enumerate}

In this regard we have observed the following open problems
which shall exhaustively determine the limitation of the parity decision tree model in these cases. 
The problems are as follows:
\begin{enumerate}
\item Does there exist a function $f_1$ with $n=2^k-1$ influencing variables such that
$Q_E(f_1) < k$?
\item Does there exist a separable query friendly function $f_2$ with $n$ influencing variables, 
where $2^{k-1} + 2^{k-2} - 1 < n < 2^k-2,~ k>2$?
\end{enumerate}
If any of the above problems yield a negative result that would imply the parity decision 
tree model indeed completely characterizes the functions in such a scenario.

Then we have analyzed the deterministic and exact quantum query complexity of the
class of M-M Bent functions ($\B_n$) in Section~\ref{sec:4}. 
We have used the parity decision tree method to obtain advantage and provide a simple 
generalized query algorithm for the $\B_n$ class of functions.
The results in this direction are as follows.

\begin{enumerate}

\item The deterministic query complexity of any function in $\B_n$ is $n$.

\item The exact quantum query complexity of any function in $\B_n$ is at most $\lceil \frac{3n}{4} \rceil$.

\item The Polynomial degree of any function in $\B_n$ is $n$, which implies $ Q_E(f)>=\frac{n}{2}~\forall~f \in B_n$. 

\end{enumerate}

The bounds obtained indicate that there are multiple M-M Bent functions on $n$ variable for any $n$ such that
they have the same exact quantum query complexity but are not PNP equivalent and have different algebraic degree.
Characterizing the different exact quantum query equivalence classes seem to be a very interesting problem, 
one that may further concretize the relation between the structure of a Boolean function and its query complexity
in different models.
The primary open problem in this direction is finding out the exact query complexity of all the functions
in $\B_n$ and thereby generalizing which functions in $\B_n$ are not PNP equivalent yet have the same query complexity.
Studying the different modified classes of M-M Bent functions are also of importance, and we hope the results obtained in 
this paper will be helpful towards forming a generalized quantum query algorithm for the Maiorana McFarland class of functions.

\end{document}